\newtheorem{theorem}{\bf{Theorem}}
\newtheorem{lemma}{\bf{Lemma}}
\newtheorem{corollary}{\bf{Corollary}}
\newtheorem{assumption}{\bf{Assumption}}
\newtheorem{objective}{\bf{Objective}}
\newtheorem{remark}{\bf{Remark}}
\renewenvironment{proof}{{\bf{Proof:}}}{$\hfill\blacksquare$}
\newtheorem{proof_th}{\bf{Proof of Theorem}}
\newcommand{\fin}[1]{\textcolor{black}{#1}}
\def\BibTeX{{\rm B\kern-.05em{\sc i\kern-.025em b}\kern-.08em
T\kern-.1667em\lower.7ex\hbox{E}\kern-.125emX}}
\begin{document}

\newcommand\copyrighttext{%
\footnotesize \textcopyright 2024 IEEE. Personal use of this material is permitted. Permission
from IEEE must be obtained for all other uses, in any current or future
media, including reprinting/republishing this material for advertising or
promotional purposes, creating new collective works, for resale or
redistribution to servers or lists, or reuse of any copyrighted
component of this work in other works.}
\newcommand\copyrightnotice{%
\begin{tikzpicture}[remember picture,overlay]
\node[anchor=south,yshift=5pt] at (current page.south) {\fbox{\parbox{\dimexpr\textwidth-\fboxsep-\fboxrule\relax}{\copyrighttext}}};
\end{tikzpicture}%
}

\title{Exponentially Stable Projector-based Control of Lagrangian Systems with Gaussian Processes}
\author{Giulio Evangelisti, Cosimo Della Santina, \IEEEmembership{Senior Member, IEEE}, and Sandra Hirche, \IEEEmembership{Fellow, IEEE} %
\thanks{Manuscript submitted for review on 19 May 2024. This work was supported by the Consolidator Grant “Safe data-driven control for human-centric systems” (CO-MAN) of the European Research Council (ERC) of the European Union (EU) under Grant 864686. (Corresponding author: Giulio Evangelisti.)}
\thanks{G. Evangelisti and S. Hirche are with the Chair of Information-oriented Control, TUM School of Computation, Information and Technology, Technical University of Munich, 80333 Munich, Germany (e-mail: {g.evangelisti; hirche}@tum.de).}
\thanks{C. Della Santina is with the Cognitive Robotics department, Delft University of Technology, Mekelweg 2, 2628 CD Delft, Netherlands (e-mail: c.dellasantina@tudelft.nl).}
}

\maketitle

\copyrightnotice

\begin{abstract}
Designing accurate yet robust tracking controllers with tight performance guarantees for Lagrangian systems is challenging due to nonlinear modeling uncertainties and conservative stability criteria. This article proposes a structure-preserving projector-based tracking control law for uncertain Euler-Lagrange (EL) systems using physically consistent Lagrangian-Gaussian Processes (L-GPs). We leverage the uncertainty quantification of the L-GP for adaptive feedforward-feedback balancing. In particular, an accurate probabilistic guarantee for exponential stability is derived by leveraging matrix analysis results and contraction theory, where the benefit of the proposed controller is proven and shown in the closed-form expressions for convergence rate and radius. Extensive numerical simulations not only demonstrate the controller's efficacy based on a two-link and a soft robotic manipulator but also all theoretical results are explicitly analyzed and validated.
\end{abstract}

\begin{IEEEkeywords}
Stability of nonlinear systems, adaptive control, robotics, machine learning, Gaussian processes.
\end{IEEEkeywords}

\section{Introduction}
\label{sec:introduction}
\IEEEPARstart{D}{ata-driven} methods offer a promising approach to enhance the performance, reliability, and safety of traditional control strategies based on parametric mathematical models. Gaussian Processes (GPs) stand out among these methods due to their data-efficiency and uncertainty quantification. However, neither GPs nor other data-driven methods generally account for physical consistency~\cite{my_tro_article}, limiting their applicability in model-based control and the accuracy of providable guarantees.

Physics-informed machine learning has been shown to improve the methods' data efficiency and reliability by encoding the variational Euler-Lagrange (EL) structure into data-driven models for mechanical and electromagnetic systems \cite{geist_survey,symp_GPR_ham_sys,blobaum_math_journ_lsi}. In particular, deep learning with neural networks has shown auspicious results for robotic systems~\cite{lutter2021combining}. However, these methods typically lack a measure for uncertainty quantification, hampering the provision of safety guarantees, such as in, e.g., \cite{lutter_iros_ua_enrgy}, where Deep Lagrangian Networks (DeLaNs) and their passivity properties are leveraged for a learning-based implementation of an energy-based controller \cite{spong_ifac_energy} for under-actuated systems. Here, sufficiently low modeling errors are an inevitable requirement, which can only be satisfied in regions near the training domain. 

\begin{figure}
\centering
\includegraphics[width=\linewidth]{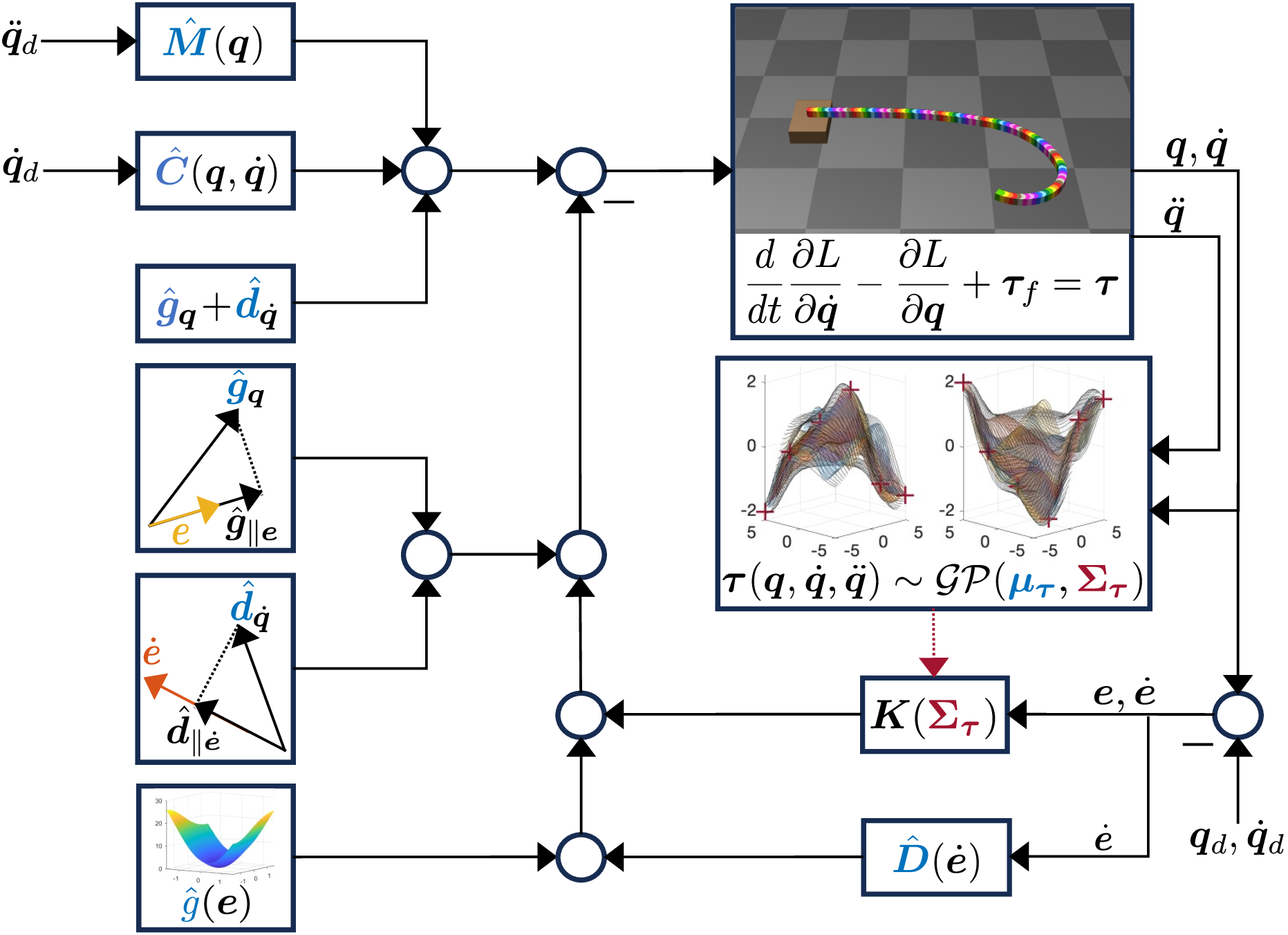}
\caption{Block scheme of the proposed projector-based L-GP control with uncertainty-adaptive feedback: $\hat{\bm{M}}$, $\hat{\bm{C}}$, $\hat{\bm{g}}_{\bm{q}}$ and $\hat{\bm{d}}_{\dot{\bm{q}}}$ denote the learned inertia, Coriolis matrix, gravity and friction, respectively, of the system with uncertain Lagrangian $L$.} 
\label{cover_fig}
\end{figure}

Applying GPs to learning-based control of robotic systems was initially proposed for local regression in a computed torque control scheme~\cite{tuong_iros}, yet without any stability guarantees. When safety is explicitly considered in data-driven control methods, guarantees often critically depend on the prior availability of a known dynamical model based on which, e.g., a reachability analysis for robotic systems with state-dependent disturbances can be performed~\cite{fisac_tac}. Feedback compensation of these residual dynamics based on black-box modeling with GPs is proposed in~\cite{thomas_tracking_control_of_el_systems} for EL systems together with an uncertainty-adaptive computed torque controller. Here, due to multiple applications of Young's inequality, the exponential convergence result suffers from conservatism and dependencies on several positive parameters that are assumed to exist. A GP-based feedback linearizing control law is formulated in \cite{jonas_fb_lin_ol} but is limited to single-input systems in controllable canonical form while asymptotic stability of the tracking error is only guaranteed in the noiseless case, reducing to ultimate boundedness otherwise. Performances of model predictive controllers have also been shown to improve significantly with learning via GPs~\cite{hewing_mpc}, where closed-loop stability guarantees have been proven for linear time-invariant systems~\cite{berberich_lti_tac} or input-to-state stability for nonlinear discrete-time systems~\cite{maiworm_mpc}. 

In the context of structure-preserving control, several different approaches have been proposed. An optimal control problem with respect to the external torque applied to a single rigid body is analyzed in \cite{bloch_opt}, but potential fields are assumed to be absent, and a numerical solution of optimality conditions up to the fourth order is required. A proportional-derivative controller is naturally generated in \cite{cosimo_ijrr_plan_soro} by leveraging the inherent impedance of the underlying soft robot. Therefore, its softness is preserved for potential environmental interactions, yet only linear stiffness and damping matrices are considered while gravity is fully compensated. Also contraction-based control design has gained a significant amount of research attention. However, it leads to the so-called integrability problem \cite{kawano_atmtc_2023} for nonlinear systems due to the involved Jacobians and, therein, partial derivatives with respect to the controllers. Altogether, leveraging physically consistent methods in learning-based control of robotic systems while being able to provide accurate stability guarantees for high performance and reliability is still an ongoing open problem. This article addresses these issues by leveraging the physical consistency of Lagrangian-Gaussian Processes (L-GPs) from~\cite{my_tro_article} in an uncertainty-adaptive control framework combined with rigorous theoretical guarantees.

\subsection{Contribution}

We propose a novel structure-preserving control design, which uses projectors to minimally alter the system's dynamics while leveraging the L-GP's covariance estimate for confidence-dependent gain adaptation. By preserving the system's dynamical structure, we aim to achieve a high level of robustness to remaining model uncertainties. At the same time, we exploit the learned model with projections to make use of the natural potential and dissipative forces driving towards the control goal, i.e., the desired trajectory in the state space. Moreover, a novel exponential stability result is derived based on intermediate matrix analysis results and arguments from contraction theory applied to Lagrangian systems. The integrability problem of contraction theory-based control due to the dependence on Jacobians is circumvented by directly exploiting the dynamics' structure. Analytically compact expressions for time-variant convergence radius and rate are derived with the goal of optimizing the tightness of the resulting bounds. At the same time, these stability results also significantly extend our earlier work, which combines the L-GP~\cite{mein_cdc_paper} with the standard PD+ controller~\cite{pdp_ref}, the latter is considered de-facto as the state of the art (SoA) in this work. All theoretical contributions are validated in numerical experiments, further underlining the practical efficacy of the proposed methods. Note that the L-GP modeling framework from~\cite{my_tro_article} is applied but does not represent a contribution of the present work.

The remainder of this article is organized as follows: After formally introducing our considered problem setting in section~\ref{sec_prob_form}, we first provide a brief overview of the modeling framework of L-GPs in section~\ref{gpr_sec}. The proposed control design is then derived and analyzed in section~\ref{sec_ctrl_dsgn}, followed by validating numerical simulations in section~\ref{num_sec}. Finally, we summarize our work's theoretical and practical implications in section~\ref{sec_conclusion}.

\subsection{Notation}

Bold lower and upper case symbols denote vectors $\bm{a}$ and matrices $\bm{A}$, $\bar{a}\!=\!\bar{\lambda}(\bm{A})$ and $\underline{a}\!=\!\underline{\lambda}(\bm{A})$ the maximal and minimal eigenvalues of $\bm{A}$, $\mathrm{E}[\cdot]$ and $\mathrm{Var}[\cdot]$ the expectation and variance operators, and $\mathbb{R}$ and $\mathbb{N}$ the set of real and natural numbers, respectively. $\bm{I}$ is the identity, $\bm{0}$ the zero and $\bm{1}$ the ones matrix. $|\cdot|$ indicates the cardinality of a set, $\mathcal{N}(\bm{\mu},\bm{\Sigma})$ a mul\fin{ti}variate Gaussian distribution with mean $\bm{\mu}$ and covariance $\bm{\Sigma}$, $\lVert\cdot\rVert_{\mathcal{L}_2}=\sqrt{\smallint_0^\infty\lVert\cdot\rVert^2dt}$ the $\mathcal{L}_2$-norm, and $\lVert\cdot\rVert$ the Euclidian norm if not stated otherwise. Positive definiteness (resp. semi-definiteness) of a symmetric matrix is indicated by $\cdot \succ \bm{0}$ (resp. $\cdot \succeq \bm{0}$) and $\mathrm{vec}(\cdot)$ stacks the columns of a matrix to form a vector. 

\section{Problem Formulation}\label{sec_prob_form}

\subsection{Dynamical System}

In this article, we focus on uncertain, fully actuated Euler-Lagrange (EL) systems with equations of motion given by~\cite{ortega_pbc_el}
\begin{equation} 
\frac{d}{dt} \frac{\partial L}{\partial\dot{\bm{q}}} - \frac{\partial L}{\partial\bm{q}} = \bm{\tau}_c = \bm{\tau} - \bm{\tau}_f \label{lagrange_fund_eqs_of_mot}
\end{equation}
with Lagrangian $L$, generalized coordinates $\bm{q}\in\mathbb{R}^N$, and generalized forces $\bm{\tau}_c\in\mathbb{R}^N$. The latter are composed of the difference between active input $\bm{\tau}$ and dissipative friction $\bm{\tau}_f$.

\begin{assumption}\label{ass_autonomous}
The Lagrangian $L\equiv L(\bm{q},\dot{\bm{q}}) = T(\bm{q},\dot{\bm{q}}) - V(\bm{q})$ in \eqref{lagrange_fund_eqs_of_mot} is autonomous and consists of the difference between unknown kinetic energy $T\colon\mathbb{R}^N\times\mathbb{R}^N\to\mathbb{R}$ and unknown potential energy $V\colon\mathbb{R}^N\to\mathbb{R}$.
\end{assumption}

\begin{assumption}\label{ass_kinetic}
The kinetic energy is quadratic $T(\bm{q},\dot{\bm{q}}) = \frac{1}{2}\dot{\bm{q}}^T\bm{M}(\bm{q})\dot{\bm{q}}$ w.r.t.~the velocities $\dot{\bm{q}}$, where $\bm{M}\colon\mathbb{R}^N\to\mathbb{R}^{N\times N}$ is the unknown, (symmetric) positive definite, inertia matrix. 
\end{assumption}

\begin{assumption}\label{ass_potential}
The potential energy is positive-semidefinite and has an equilibrium in the origin, i.e., $V(\bm{0})=0$, $V(\bm{q})\geq0$ $\forall \bm{q}\in\mathbb{R}^N$, and $\tfrac{\partial}{\partial\bm{q}} V(\bm{0})=\bm{0}$.
\end{assumption}

\begin{assumption}\label{ass_disp}
The dissipative friction $\bm{\tau}_f\!\!=\!\!\bm{D}(\dot{\bm{q}})\dot{\bm{q}}$ is the matrix-vector product of the generalized velocities with the (symmetric) positive semi-definite damping matrix $\bm{D}(\dot{\bm{q}})\!\succeq\!\bm{0}$. 
\end{assumption}

Assumptions~\ref{ass_autonomous}--\ref{ass_disp} describe the considered system class and represent its physical properties. Note that Assumption~\ref{ass_autonomous} is nonrestrictive since time-variance and dissipation can be introduced to the conservative left-hand side of \eqref{lagrange_fund_eqs_of_mot} via the external force component $\bm{\tau}_f$. Assumption~\ref{ass_kinetic} enforces positivity of the kinetic or electric energy for all nonzeros velocities or currents, respectively, and is valid for, e.g., any non-relativistic mechanical system, while Assumption~\ref{ass_potential} requires w.l.o.g.~that the chosen coordinates have an equilibrium at the origin. Lastly, Assumption~\ref{ass_disp} assumes a certain friction or resistance structure applicable to various dissipation phenomena, e.g., linear viscous, air-drag, structural, or continuous Coulomb dampings. In case of a purely position- or charge-dependent damping matrix $\bm{D}(\bm{q})$, the Rayleigh dissipation potential $R(\bm{q},\dot{\bm{q}})\!\!=\!\!\tfrac{1}{2}\dot{\bm{q}}^\top\bm{D}(\bm{q})\dot{\bm{q}}$ structurally permits a derivation according to $\bm{\tau}_f\!=\!\tfrac{\partial R}{\partial\dot{\bm{q}}}\!=\!\bm{D}(\bm{q})\dot{\bm{q}}$.

\begin{remark}
For conciseness, the remainder of this work focuses on mechanical systems, cf. Assumption~\ref{ass_kinetic}. However, due to our variational modeling based on generalized energies, the system class \eqref{lagrange_fund_eqs_of_mot} naturally includes mechanical, electrical, and even mixed-nature, i.e., electromechanical systems \cite{ortega_pbc_el}.
\end{remark}

Exploiting Assumptions \ref{ass_autonomous}--\ref{ass_disp} and applying the chain rule to \eqref{lagrange_fund_eqs_of_mot}, we obtain the well-known matrix-vector expression
\begin{equation} 
\bm{M}(\bm{q})\ddot{\bm{q}} + \bm{C}(\bm{q},\dot{\bm{q}})\dot{\bm{q}} + \bm{g}(\bm{q}) + \bm{D}(\dot{\bm{q}})\dot{\bm{q}} = \bm{\tau} \, , \label{M_C_g_system}
\end{equation}
where $\bm{M}(\bm{q})\in\mathbb{R}^{N\times N}$ is the (symmetric) positive definite, inertia matrix, $\bm{C}(\bm{q},\dot{\bm{q}})\in\mathbb{R}^{N\times N}$ the generalized Coriolis matrix such that $\dot{\bm{M}}-2\bm{C}$ is skew-symmetric, and $\bm{g}(\bm{q})\coloneqq\tfrac{\partial}{\partial\bm{q}}V$ the vector of generalized potential forces derived from the potential energy $V(\bm{q})\in\mathbb{R}$.

\subsection{Problem Statement}

Having specified our considered class of EL systems, we now describe the problem setting.

\begin{objective}
The overarching goal of this article is to design a learning-based tracking controller in the form
\begin{equation}
\bm{\tau}=\bm{u}(\bm{q}_d,\dot{\bm{q}}_d,\ddot{\bm{q}}_d,\bm{e},\dot{\bm{e}},\ddot{\bm{e}}) - \bm{K}_P(\bm{\Sigma}_{\bm{\tau}})\bm{e} - \bm{K}_D(\bm{\Sigma}_{\bm{\tau}})\dot{\bm{e}} \label{gen_ctlr_obj1}
\end{equation} 
which exponentially stabilizes the dynamics of the error $\bm{e}=\bm{q}-\bm{q}_d$ w.r.t.~the desired trajectory $\bm{q}_d(t)\in\mathbb{R}^N$. At the same time, the nonlinear impedance $\bm{g}(\bm{q})\!+\!\bm{D}(\dot{\bm{q}})\dot{\bm{q}}$ of the system \eqref{M_C_g_system} is to be preserved by \eqref{gen_ctlr_obj1} such that the closed-loop's impedance is comprised of $\bm{g}(\bm{e})\!+\!\bm{D}(\dot{\bm{e}})\dot{\bm{e}}$ in addition to uncertainty-adaptive injections. These confidence-dependent mappings $\bm{K}_{P,D}\colon\mathbb{R}^{N\times N}\to\mathbb{R}^{N\times N}$ are designed to balance pure feedback with the model-based, mixed feedforward-feedback law $\bm{u}(\cdot)$ based on the covariance matrix $\bm{\Sigma}_{\bm{\tau}}\succ\bm{0}$.
\end{objective}

At the same time, we consider the following modeling scenario, introduced informally for the sake of space. The interested reader is referred to \cite{mein_cdc_paper} for the accurate version. 

\begin{objective}%
The unknown Lagrangian function $L(\bm{q},\dot{\bm{q}})$ from \eqref{lagrange_fund_eqs_of_mot} is approximated by a data-driven estimate $\hat{L}(\bm{q},\dot{\bm{q}})$ which is physically consistent \cite{my_tro_article} and used in the model-based controller $\bm{u}(\cdot)$. For this, we assume access to noisefree position $\bm{q}_i$ and velocity observations $\dot{\bm{q}}_i$, $i \!=\! 1,\dots,D$ with $D\in\mathbb{N}$, and to potentially noisy acceleration $\ddot{\bm{q}}_i\!+\!\bm{\alpha}_i$, $\bm{\alpha}_i \!\sim\! \mathcal{N}(\bm{0},\bm{\Sigma}_{\bm{\alpha}_i})$, and torque measurements
\begin{equation}
\bm{y}_i = \bm{\tau}_i(\bm{q}_i,\dot{\bm{q}}_i,\ddot{\bm{q}}_i) + \bm{\theta}_i \, , \quad \bm{\theta}_i \sim \mathcal{N}(\bm{0},\bm{\Sigma}_{\bm{\theta}_i}) \label{noise_meas_model_tau}\, .
\end{equation}
The noise processes $\{\bm{\alpha}_i\}$ and $\{\bm{\theta}_i\}$ are white, zero-mean, uncorrelated, and have known covariance matrices $\bm{\Sigma}_{\bm{\alpha}_i}$ and $\bm{\Sigma}_{\bm{\theta}_i}$, respectively.
After collecting all $D$ observations at input positions $\bm{X}\!=\![\bm{q}_i^\top\!,\dot{\bm{q}_i}^\top\!,\ddot{\bm{q}}_i^\top\!+\!\bm{\alpha}_i^\top]$ with analogous output matrix $\bm{Y}\!=\![\bm{y}_i^\top]$, we obtain the training data set $\mathcal{D}=\{\bm{X},\bm{Y}\}$.
\end{objective}

\section{GP Regression for Lagrangian Systems}\label{gpr_sec}

This section briefly overviews our employed modeling framework from \cite{mein_cdc_paper,my_tro_article}. For a complete introduction, particularly to GPs, the reader is referred to the literature \cite{rasmussen_gp_mit_book,raissi_gp_diff_eqs,nips_deriv_obs_solak,md_gps_alvarez}.

\subsection{Background: Gaussian Process (GP) Framework}

A Gaussian Process (GP) can be interpreted as a Gaussian distribution extended from random variables to functions. By construction, it thus inherits the properties of the Normal distribution, such that, e.g., conditioning and marginalization remain Gaussian. Considering vector-valued functions $\bm{f}\colon\mathbb{R}^M\to\mathbb{R}^N$, a GP with mean $\bm{m}(\bm{x})$ and covariance or kernel $\bm{K}(\bm{x},\bm{x}^\prime)$ is denoted by $\bm{f}(\bm{x}) \sim \mathcal{GP}\left(\bm{m}(\bm{x}), \bm{K}(\bm{x},\bm{x}^\prime) \right)$. Given $D$ observations $\bm{y}_i = \bm{f}(\bm{x}_i) + \bm{\epsilon}_i$, $i=1,\dots,D$, perturbed by white noise $\bm{\epsilon}_i\sim\mathcal{N}(\bm{0},\bm{\Sigma}_{\bm{\epsilon}_i})$, GPs assume a prior distribution of the function $\bm{f}$, specified by $\bm{m}$ and $\bm{K}$, and then leverage the resulting predictive distribution by using Bayes' rule. Equivalently, from the function-space view,  GPs exploit the joint Gaussian distribution of the measurements $\bm{Y}$ and a desired estimate $\bm{f}(\bm{x})$ by conditioning on the former, giving rise to the posterior mean $\bm{\mu}_{\bm{f}}(\bm{x})\equiv\mathrm{E}[\bm{f}(\bm{x})|\bm{Y},\bm{X}]$ and covariance $\bm{\Sigma}_{\bm{f}}(\bm{x})  \equiv\mathrm{Var}[\bm{f}(\bm{x})|\bm{Y},\bm{X}]$ given in turn by
\begin{align}
\bm{\mu}_{\bm{f}}(\bm{x}) \!\!=& \bm{m}(\bm{x}) \!+\! \bm{K}(\bm{x},\bm{X})\big(\!\bm{K}(\!\bm{X}\!,\!\bm{X}\!)\!\!+\!\!\bm{\Sigma}_{\bm{\epsilon}}\!\big)^{-1} \!\mathrm{vec}(\bm{Y}\!\!\!-\!\bm{m}(\bm{X})),\nonumber\\ 
\bm{\Sigma}_{\bm{f}}(\bm{x}) \!\!=& \bm{K}(\bm{x},\bm{x}) \!-\! \bm{K}(\bm{x},\bm{X}) \big(\bm{K}(\bm{X},\bm{X})\!+\!\bm{\Sigma}_{\bm{\epsilon}}\big)^{-1} \! \bm{K}(\bm{X},\bm{x}) ,\nonumber
\end{align}
where $\bm{K}(\bm{X},\bm{X})$ and $\bm{\Sigma}_{\bm{\epsilon}}$ are the multidimensional Gramian and noise covariance block matrices, respectively.

The kernel matrix $\bm{K}(\bm{x},\bm{x}^\prime)\in\mathbb{R}^{N\times N}$ is positive semidefinite \cite{md_gps_alvarez} for any $\bm{x},\bm{x}^\prime$, quantifies the correlation or similarity between the components of $\bm{f}(\bm{x})$ and $\bm{f}(\bm{x}^\prime)$, and determines higher-level functional properties such as smoothness. Dependent on so-called hyperparameters, the marginal likelihood \cite{rasmussen_gp_mit_book} is mostly maximized numerically to maximize the probability of observing the measured outputs at the given inputs.

Also, GPs have the key property that linear transformations remain GPs \cite{raissi_gp_diff_eqs} due to their construction based on the expectation. Thus, applying a linear transformation operator $\mathcal{T}_{\bm{x}}$, e.g., differentiation or integration, yields a GP again \cite{rath_gp_ldeqs}
\begin{equation} 
\mathcal{T}_{\bm{x}}\bm{f}(\bm{x}) \sim \mathcal{GP}\left(\mathcal{T}_{\bm{x}}\bm{m}(\bm{x}), \mathcal{T}_{\bm{x}} \bm{K}(\bm{x},\bm{x}^\prime) \mathcal{T}^\top_{\bm{x}^\prime} \right) \, . \label{GP_linear_op}
\end{equation}

Having introduced the basic framework of GPs, we now describe their unification with Lagrangian first-order principles.

\subsection{Background: Lagrangian-Gaussian Process (L-GP)}

\subsubsection{Modeling Approach}

The core concept we apply is to physically constrain the employed GP distribution's function space. For this, we exploit the Lagrangian-differential operator
\begin{equation} 
\mathcal{L}_{\bm{q}} \coloneqq \left( \tfrac{\partial}{\partial\dot{\bm{q}}\!^\top}\ddot{\bm{q}} + \tfrac{\partial}{\partial\bm{q}\!^\top} \dot{\bm{q}} \right)\tfrac{\partial}{\partial\dot{\bm{q}}} - \tfrac{\partial}{\partial\bm{q}} \label{lagrange_op}
\end{equation}
inducing the multidimensional dynamics \eqref{M_C_g_system} from a scalar GP for the uncertain Lagrangian function $L(\bm{q},\dot{\bm{q}})$ with mean $m_L$ and kernel $k_L$. Thus, Hamilton's principle of Least Action is deterministically guaranteed by embedding the differential equation structure~\eqref{lagrange_fund_eqs_of_mot} into the GP. Since the transformation \eqref{lagrange_op} is linear, we obtain the multidimensional GP
\begin{equation} 
\bm{\tau}_{c}(\bm{q},\dot{\bm{q}},\ddot{\bm{q}}) \sim \mathcal{GP}\left(\mathcal{L}_{\bm{q}} m_L(\bm{q},\dot{\bm{q}}), \mathcal{L}_{\bm{q}} \mathcal{L}_{\bm{q}^\prime}^\top k_L(\bm{q},\dot{\bm{q}},\bm{q}^\prime,\dot{\bm{q}}^\prime)\right) \label{tau_vec_l_gp}
\end{equation}%
for the conservative torques.

Incorporating generalized friction in compliance with Lagrangian mechanics into structured model learning is nontrivial \cite{lutter2021combining}. However, the specific matrix-vector structure asserted in Assumption~\ref{ass_disp} can be enforced with the covariance \cite{my_tro_article} 
\begin{equation}
\bm{K}_f(\dot{\bm{q}},\dot{\bm{q}}^\prime) = \mathrm{diag}(\dot{\bm{q}})\bm{K}_d(\dot{\bm{q}},\dot{\bm{q}}^\prime)\mathrm{diag}(\dot{\bm{q}}^\prime)  \label{dsp_cov_expr}
\end{equation}
for which positive-semidefiniteness and passivity guarantees can be provided \cite{rui_l4dc}. Therefore, combining the dissipative GP $\bm{\tau}_f\sim\mathcal{GP}(\bm{m}_f,\bm{K}_f)$ for prior mean $\bm{m}_f$ and kernel \eqref{dsp_cov_expr} with the conservative torques \eqref{tau_vec_l_gp}, we arrive at the multidimensional composite GP model
\begin{equation}
\bm{\tau}=\bm{\tau}_c+\bm{\tau}_f \sim \mathcal{GP}\left(\mathcal{L}_{\bm{q}} m_L + \bm{m}_f, \mathcal{L}_{\bm{q}} \mathcal{L}_{\bm{q}^\prime}^\top k_L + \bm{K}_f\right) \label{dsp_lgp} \, .
\end{equation}

\subsubsection{Energy Structuring}

The Lagrangian-GP in \eqref{lagrange_op}--\eqref{tau_vec_l_gp} is further split up into its energy components $L=T-G-U$ for each of which we consider an underlying independent GP again, i.e., $T,G,U\sim\mathcal{GP}$. In particular, for the kinetic energy, we make use of a specific kernel structure \cite{mein_cdc_paper} given by
\begin{equation} 
k_T = \frac{1}{4} \dot{\bm{q}}^\top \mathrm{diag}(\dot{\bm{q}}^{\prime}) \bm{\Theta}_{\bm{M}}(\bm{q},\bm{q}^\prime) \mathrm{diag}(\dot{\bm{q}}^{\prime}) \dot{\bm{q}} \label{quadratic_kernel_functional}
\end{equation}
with the Cholesky decomposed covariance $\bm{\Theta}_{\bm{M}} = \bm{R}_{\bm{M}}^\top\bm{R}_{\bm{M}}$ and upper-right triangular $\bm{R}_{\bm{M}}(\bm{q},\bm{q}^\prime)$. The elastic energy GP $U(\bm{q})$ is constrained analogously. Contrary to deterministically enforcing positive definiteness as in \cite{lutter2019deep} via Cholesky decomposition of the output matrix, e.g., the mass-inertia, this approach only deterministically preserves the energies' quadratic form in order to stochastically preserve Gaussianity. In this manner, the mass-inertia matrix remains a (symmetric) matrix-valued GP.

Due to the structurally enforced constraints \eqref{lagrange_op}--\eqref{quadratic_kernel_functional}, the posterior $\hat{\bm{\tau}}\!\!\coloneqq\!\!\bm{\mu}_{\bm{\tau}}\!\!=\!\!\mathrm{E}[\bm{\tau}|\bm{Y}\!\!,\!\bm{X}]\!$ of the L-GP \eqref{dsp_lgp} can be written as 
\begin{equation}
\hat{\bm{\tau}}(\bm{q},\dot{\bm{q}},\ddot{\bm{q}}) = \hat{\bm{M}}(\bm{q})\ddot{\bm{q}} + \hat{\bm{C}}(\bm{q},\dot{\bm{q}})\dot{\bm{q}} + \hat{\bm{g}}(\bm{q}) + \hat{\bm{D}}(\dot{\bm{q}})\dot{\bm{q}} \label{l_gp_post_mu}\, .
\end{equation}
The (symmetric) posterior mass-inertia estimate $\hat{\bm{M}}\colon\mathbb{R}^N\to\mathbb{R}^{N\times N}$ is guaranteed to be positive definite with high probability~\cite{mein_cdc_paper}. Also, $\hat{\bm{C}}$ is here the Coriolis estimate constructed as $\hat{\bm{C}}(\bm{q},\dot{\bm{q}})\!\coloneqq\!\frac{1}{2}(\!\tfrac{\partial^2\hat{T} }{\partial\dot{\bm{q}}\partial\bm{q}\!^\top} \!+\! \dot{\hat{\bm{M}}}(\bm{q}) \!-\! \tfrac{\partial^2\hat{T} }{\partial\bm{q}\partial\dot{\bm{q}}\!^\top}\!)$~\cite{murray_amitrm}. These first two terms, combined with the potential force estimate $\hat{\bm{g}}(\bm{q})\!=\!\tfrac{\partial}{\partial\bm{q}}(\hat{G}\!+\!\hat{U})$, also comprise the posterior $\hat{\bm{\tau}}_c\!\coloneqq\!\bm{\mu}_{\bm{\tau}_c}$ of the conservative torque GP \eqref{tau_vec_l_gp}, and are provably lossless \cite{mein_cdc_paper}, deterministically.

Having laid out the modeling framework, we can now proceed with the first contributions of this article.

\section{Control Design}\label{sec_ctrl_dsgn}

In the following, we present our proposed feedback control architecture targeting robust and accurate yet efficient trajectory tracking. 

\subsection{Natural Structure-Preserving Control}

We propose a structure-preserving control scheme based on projections according to
\begin{align}
\bm{\tau} \!=\! \hat{\bm{M}}(\bm{q})\ddot{\bm{q}}_d \!+\! \hat{\bm{C}}(\bm{q},\dot{\bm{q}})\dot{\bm{q}}_d \!&+\! \left(\bm{I}\!\!-\!\!h(\bm{e}\!^\top\!\hat{\bm{g}}_{\bm{q}})\bm{P}_{\bm{e}}\right)\!\hat{\bm{g}}_{\bm{q}} \!-\! \bm{g}_d(\bm{e})\nonumber\\ &+\!\big(\bm{I}\!\!-\!\! h(\dot{\bm{e}}\!^\top\!\hat{\bm{d}}_{\dot{\bm{q}}})\bm{P}_{\dot{\bm{e}}}\big)\hat{\bm{d}}_{\dot{\bm{q}}} \!-\! \bm{d}_d(\dot{\bm{e}}) \label{c_law}
\end{align}
where $\bm{e}=\bm{q}-\bm{q}_d$ is the position error based on the desired reference trajectory $\bm{q}_d(t)$, and $\hat{\bm{M}}(\bm{q})$, $\hat{\bm{C}}(\bm{q},\dot{\bm{q}})$, $\hat{\bm{g}}(\bm{q})\eqqcolon\hat{\bm{g}}_{\bm{q}}$ and $\hat{\bm{D}}(\dot{\bm{q}})\dot{\bm{q}}\eqqcolon\hat{\bm{d}}_{\dot{\bm{q}}}$ are the learned inertia, Coriolis matrix, gravity and friction, respectively, as defined in \eqref{l_gp_post_mu}. The Heaviside step function $h(x)$ is given by 
\begin{equation}
h(x) = \tfrac{1}{2}\left(1+\mathrm{sign}(x)\right) \label{heaviside_func}
\end{equation}
for $x\in\mathbb{R}$, and $\bm{P}_{\bm{e}}\in\mathbb{R}^{N\times N}$ is the orthogonal projector \cite{roy_la_ma4s} onto the one-dimensional space spanned by $\bm{e}\in\mathbb{R}^N$
\begin{equation}
\bm{P}_{\bm{e}}=\tfrac{\bm{e}\bm{e}^\top}{\|\bm{e}\|^2}	\, . \label{proj_mat}
\end{equation}
The methodology behind \eqref{c_law} is to alternate the well-known standard PD+ controller~\cite{pdp_ref} to compensate potential and dissipative forces stemming from, e.g., gravity and friction, respectively, only when necessary. Thus, if either of the model-based estimates points in the direction of the errors, the parallel components are preserved and exploited, while orthogonal subvectors are canceled. In this way, we can improve the actuation efficiency of \cite{pdp_ref}. Note that this is also implicitly achieved in part by combining the PD+ controller~\cite{pdp_ref} with the increased modeling accuracy of the L-GP framework~\cite{mein_cdc_paper}. Although, in total, four pairings of the impedance $\bm{g}(\bm{q})\!+\!\bm{D}(\bm{q},\dot{\bm{q}})\dot{\bm{q}}$ with positional and velocity errors would be possible, our usage of the two in \eqref{c_law} leads to variable yet natural PD gains exclusively depending on the error variable they multiply, a typical design approach for adaptive feedback gains~\cite{ravichandran}. Also, note that the controller \eqref{c_law} is continuously differentiable w.r.t.~the arguments of the Heaviside functions since these are zero only when the respective vectors become orthogonal. Due to the multiplication with the respective projections, the parallel subvectors also become zero in these instances, enforcing a smooth mapping.

In the following subsection, we elaborate on the role of the remaining quantities in the controller \eqref{c_law}, i.e., the desired potential and dissipation forces $\bm{g}_d(\bm{e})$ and $\bm{d}_d(\dot{\bm{e}})$, respectively.

\subsection{Uncertainty-based Potential Shaping and Damping Injection: Feedforward-Feedback Balancing}

In order to obtain a structurally preserved closed-loop, we propose a reference potential and dissipation given by
\begin{subequations}\label{des_dmp}
\begin{align}
\bm{g}_d(\bm{e},t) &= \hat{\bm{g}}(\bm{e}) + \bm{K}_P(\bm{\Sigma_{\tau}}(t))\bm{e} \label{des_pot_frcs} \\
\bm{d}_d(\dot{\bm{e}},t) &= \hat{\bm{d}}(\dot{\bm{e}}) + \bm{K}_D(\bm{\Sigma_{\tau}}(t))\dot{\bm{e}} 
\end{align}
\end{subequations}
with (time-variant) uncertainty-adaptive energy shaping and damping injection based on the covariance matrix $\bm{\Sigma_{\tau}}$ and positive definite matrix functions $\bm{K}_{P,D}\colon \mathbb{R}^{N\times N} \to \mathbb{R}^{N\times N}$. Aside from achieving robustness by this design approach due to the minimal dynamical intervention, this is also particularly beneficial upon potential interactions with the environment, such as for a soft robot remaining inherently compliant due to its preserved softness \cite{cosimo_ijrr_plan_soro}, instead of being stiffened by feedback. To still be able to accurately achieve the control task, the uncertainty-adaptive gains $\bm{K}_{P,D}(\bm{\Sigma_{\tau}})$ have the purpose of implementing an automatic balancing between feedforward and feedback elements. 

\subsubsection{Closed-Loop Dynamics}

To begin with, we assume zero modeling errors $\bm{g}\!\equiv\!\hat{\bm{g}},\bm{d}\!\equiv\!\hat{\bm{d}}$ with $\bm{K}_P(\bm{0})\!=\!\bm{K}_D(\bm{0})\!=\!\bm{0}$ and linear matrix-vector mappings $\bm{g}\!=\!\bm{K}\bm{q},\bm{d}\!=\!\bm{D}\dot{\bm{q}}$ for constant $\bm{0}\!\prec\!\bm{K},\bm{D}\!\in\!\mathbb{R}^{N\times N}$, e.g., for linear gravity-compensated spring-damper systems, this choice effectively reduces \eqref{c_law} to a feed-forward controller:
\begin{equation}
\bm{\tau} = \hat{\bm{M}}\ddot{\bm{q}}_d \!+\! \hat{\bm{C}}\dot{\bm{q}}_d \!+\! \bm{g}(\bm{q}_d) \!+\! \bm{d}(\dot{\bm{q}}_d) \!-\! h(\bm{e}\!^\top\!\bm{g})\bm{g}_{\parallel\bm{e}} \!-\! h(\dot{\bm{e}}\!^\top\!\bm{d})\bm{d}_{\parallel\dot{\bm{e}}} \notag
\end{equation}
with $\bm{g}_{\parallel\bm{e}}\!\!\coloneqq\!\!\bm{P_e}\bm{g}$, $\bm{d}_{\parallel\dot{\bm{e}}}\!\!\coloneqq\!\!\bm{P}_{\dot{\bm{e}}}\bm{d}$. Here, feedback is only present if the according error vectors point in the direction of the respective gravity and friction vectors, thus leading to a structural amplification, i.e., doubling, of the parallel subvectors $\bm{g}_{\parallel \bm{e}},\bm{d}_{\parallel \dot{\bm{e}}}$.

Next, considering in a more general scenario the model error $\tilde{\bm{\tau}}\!=\!\bm{\tau}\!-\!\hat{\bm{\tau}}$ and nonlinear vector maps $\bm{g},\bm{d}$, we follow that the controller \eqref{c_law} results in the closed-loop system
\begin{align}
\hat{\bm{M}}(\bm{q})\ddot{\bm{e}} + \hat{\bm{C}}(\bm{q},\dot{\bm{q}})\dot{\bm{e}} &+ h(\dot{\bm{e}}^\top\hat{\bm{d}}(\dot{\bm{q}}))\tfrac{\dot{\bm{e}}^\top\hat{\bm{d}}(\dot{\bm{q}})}{\|\dot{\bm{e}}\|^2}\dot{\bm{e}} + \bm{d}_d(\dot{\bm{e}}) \label{error_dyn}\\
&+ h(\bm{e}^\top\hat{\bm{g}}(\bm{q}))\tfrac{\bm{e}^\top\hat{\bm{g}}(\bm{q})}{\|\bm{e}\|^2}\bm{e} + \bm{g}_d(\bm{e}) + \tilde{\bm{\tau}} = \bm{0} \nonumber \, .
\end{align}
It becomes clear that the control law \eqref{c_law} leads to error dynamics with adaptively increasing damping and stiffness coefficients stemming from the structural preservation of the system's dynamics. Note that the dynamics \eqref{error_dyn} are also continuously differentiable and smooth w.r.t.~the Heaviside functions' arguments. Moreover, despite the normalizations w.r.t.~the squared lengths of the error variables as part of the projectors in \eqref{c_law}, the dynamics \eqref{error_dyn} remain bounded, since the projection matrices \eqref{proj_mat} only give the parallel component $\hat{\bm{g}}_{\parallel\bm{e}}\!\!\coloneqq\!\!\frac{\bm{e}^\top\!\hat{\bm{g}}}{\|\bm{e}\|^2}\bm{e}\!\!=\!\!\bm{P_e}\hat{\bm{g}}$, i.e., $\lVert\bm{P}_{\bm{e}}\rVert\!\!=\!\!1$ such that $\lVert \bm{P_e}\hat{\bm{g}}\rVert \!\leq\! \lVert\hat{\bm{g}}\rVert$. 

\subsubsection{Variance-based Gain Adaptation}

For uncertainty-adaptive balancing between the feedback and feedforward elements in \eqref{c_law}--\eqref{des_dmp}, we propose a specific structure for the matrix gains $\bm{K}_{P,D}$ given by
\begin{equation}
\bm{K}(\bm{\Sigma_{\tau}}) \coloneqq \bm{K}_1\!\left(\!\bm{I}-\left[\bm{K}_3(\bm{K}_2\!+\!\bm{\Sigma_{\tau}})\bm{K}_3+\bm{K}_1\right]^{-1}\!\bm{K}_1\!\right) \label{sigma_gain_adapt_law}
\end{equation}
with constant, positive definite matrices $\bm{K}_{1-3}\succ\bm{0}$. Using variational principles for the eigenvalues of symmetric operators \cite{bhatia_ma}, we can provide a guarantee on the user-definably bounded interval in which the eigenvalues of $\bm{K}(\bm{\Sigma_{\tau}})$ lie.

\begin{lemma}\label{lemma_one} The uncertainty-adaptive gain \eqref{sigma_gain_adapt_law} with constant $\bm{0} \prec \underline{k}_i\bm{I} \prec \bm{K}_i \prec \bar{k}_i\bm{I}$, where $\underline{k}_i,\bar{k}_i\in\mathbb{R}^+$ for $i=1,2,3$, is guaranteed to fulfill the linear matrix inequalities 
\begin{subequations}
\begin{align}
\tfrac{1}{(\underline{k}_3^2\underline{k}_2)^{-1}+\underline{k}_1^{-1}}\bm{I} &\prec \bm{K}(\bm{\Sigma_{\tau}}) \prec \bar{k}_1\bm{I} \, , \label{bound_k_gain}\\
-\left(\tfrac{\bar{k}_1\bar{k}_3}{\underline{k}_3^2\underline{k}_2+\underline{k}_1}\right)^2 |\underline{\dot{\sigma}}_{\bm{\tau}}|\bm{I} &\prec \dot{\bm{K}}(\bm{\Sigma_{\tau}}) \prec \left(\tfrac{\bar{k}_1\bar{k}_3}{\underline{k}_3^2\underline{k}_2+\underline{k}_1}\right)^2 \bar{\dot{\sigma}}_{\bm{\tau}}\bm{I} \, , \label{bound_dk_dt}
\end{align}
\end{subequations}
where the second inequality pair assumes $-|\underline{\dot{\sigma}}_{\bm{\tau}}|\bm{I} \prec \dot{\bm{\Sigma}}_{\bm{\tau}} \prec \bar{\dot{\sigma}}_{\bm{\tau}}\bm{I}$ holds for the covariance derivative $\dot{\bm{\Sigma}}_{\bm{\tau}}$ with $\underline{\dot{\sigma}}_{\bm{\tau}}\leq 0$.
\end{lemma}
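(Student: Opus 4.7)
The plan is to first obtain a more tractable representation of $\bm{K}(\bm{\Sigma}_{\bm{\tau}})$ by inverting the definition \eqref{sigma_gain_adapt_law}. Setting $\bm{M}=\bm{K}_3(\bm{K}_2+\bm{\Sigma}_{\bm{\tau}})\bm{K}_3$, so that $\bm{K}=\bm{K}_1-\bm{K}_1(\bm{M}+\bm{K}_1)^{-1}\bm{K}_1$, one application of the Sherman--Morrison--Woodbury identity reduces this to the parallel-combination form
\begin{equation*}
\bm{K}^{-1}(\bm{\Sigma}_{\bm{\tau}}) \;=\; \bm{K}_1^{-1} \;+\; \bm{K}_3^{-1}\bigl(\bm{K}_2+\bm{\Sigma}_{\bm{\tau}}\bigr)^{-1}\bm{K}_3^{-1}.
\end{equation*}
All subsequent bounds would be read off from this expression.

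For \eqref{bound_k_gain}, the upper bound follows from inverse monotonicity: since the added term is positive definite, $\bm{K}^{-1}\succ\bm{K}_1^{-1}$, hence $\bm{K}\prec\bm{K}_1\prec\bar k_1\bm{I}$. For the lower bound I would chain scalar-type Loewner estimates. Using $\bm{\Sigma}_{\bm{\tau}}\succ\bm{0}$ and $\bm{K}_i\succ\underline k_i\bm{I}$, one has $(\bm{K}_2+\bm{\Sigma}_{\bm{\tau}})^{-1}\prec\underline k_2^{-1}\bm{I}$ and $\bm{K}_3^{-2}\prec\underline k_3^{-2}\bm{I}$; congruence with $\bm{K}_3^{-1}$ then yields $\bm{K}_3^{-1}(\bm{K}_2+\bm{\Sigma}_{\bm{\tau}})^{-1}\bm{K}_3^{-1}\prec(\underline k_3^2\underline k_2)^{-1}\bm{I}$, and adding $\bm{K}_1^{-1}\prec\underline k_1^{-1}\bm{I}$ before inverting delivers the stated lower bound on $\bm{K}$.

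For \eqref{bound_dk_dt} the idea is to differentiate $\bm{K}\bm{K}^{-1}=\bm{I}$ in time. Since only $\bm{\Sigma}_{\bm{\tau}}$ is time-varying, this gives
\begin{equation*}
\dot{\bm{K}} \;=\; \bm{B}^{\top}\dot{\bm{\Sigma}}_{\bm{\tau}}\bm{B}, \qquad \bm{B}\;\coloneqq\;(\bm{K}_2+\bm{\Sigma}_{\bm{\tau}})^{-1}\bm{K}_3^{-1}\bm{K}.
\end{equation*}
Using the equivalent rewriting $\bm{K}=\bm{K}_1\bm{A}^{-1}\bm{K}_3(\bm{K}_2+\bm{\Sigma}_{\bm{\tau}})\bm{K}_3$ with $\bm{A}=\bm{M}+\bm{K}_1$, the factor collapses to $\bm{B}=\bm{K}_3\bm{A}^{-1}\bm{K}_1$. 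Submultiplicativity of $\|\cdot\|_2$ and $\|\bm{A}^{-1}\|_2=1/\underline\lambda(\bm{A})$ give $\|\bm{B}\|_2\le\bar k_1\bar k_3/\underline\lambda(\bm{A})$, and a Weyl-type inequality together with the congruence estimate $\underline\lambda(\bm{K}_3(\bm{K}_2+\bm{\Sigma}_{\bm{\tau}})\bm{K}_3)\ge\underline k_3^2\underline k_2$ bounds the denominator by $\underline k_3^2\underline k_2+\underline k_1$. Sandwiching the hypothesis $-|\underline{\dot\sigma}_{\bm{\tau}}|\bm{I}\prec\dot{\bm{\Sigma}}_{\bm{\tau}}\prec\bar{\dot\sigma}_{\bm{\tau}}\bm{I}$ between $\bm{B}^\top$ and $\bm{B}$ and using $\bm{B}^\top\bm{B}\preceq\|\bm{B}\|_2^2\bm{I}$ then yields \eqref{bound_dk_dt}.

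The main obstacle I anticipate is bookkeeping with non-commuting symmetric factors, in particular verifying that the suggestive scalar formula $\dot K=\bigl(K_1/(K_1+U)\bigr)^{2}\dot U$ survives in the matrix setting only after one carefully routes every inequality through congruence ($\bm{X}\preceq\bm{Y}\Rightarrow\bm{C}^{\top}\bm{X}\bm{C}\preceq\bm{C}^{\top}\bm{Y}\bm{C}$), Weyl's additive inequality, and the sign flip of Loewner order under inversion of positive definite matrices. Once those steps are established and the Woodbury identity provides the separated form of $\bm{K}^{-1}$, the remainder reduces to scalar algebra on eigenvalues.
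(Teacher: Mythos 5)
Your proposal is correct and follows essentially the same route as the paper: the same matrix-inversion-lemma (Woodbury) reformulation $\bm{K}^{-1}=\bm{K}_1^{-1}+\bm{K}_3^{-1}(\bm{K}_2+\bm{\Sigma}_{\bm{\tau}})^{-1}\bm{K}_3^{-1}$ for \eqref{bound_k_gain}, and the identical sandwich expression $\dot{\bm{K}}=\bm{K}_1\tilde{\bm{K}}^{-1}\bm{K}_3\dot{\bm{\Sigma}}_{\bm{\tau}}\bm{K}_3\tilde{\bm{K}}^{-1}\bm{K}_1$ for \eqref{bound_dk_dt}. The only (cosmetic) difference is that you bound the outer factors via congruence, Weyl, and spectral-norm submultiplicativity where the paper cites its earlier eigenvalue lemma and Lidskii's corollary, which yields the same constants.
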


\begin{proof}
For the proof of the bounds for $\bm{K}(\bm{\Sigma_{\tau}})$, we reformulate \eqref{sigma_gain_adapt_law} to $\bm{K}(\bm{\Sigma_{\tau}})=(\bm{K}_1^{-1} + \bm{K}_3^{-1}(\bm{K}_2+\bm{\Sigma_{\tau}})^{-1}\bm{K}_3^{-1})^{-1}$ by making use of the matrix inversion lemma \cite[C.4.3]{boyd_convex_opt}. The inequalities~\eqref{bound_k_gain} then directly follow from \cite[Lemma~3]{my_tro_article} after utilizing $\bm{\lambda}(\bm{A}^{-1})=\bm{\lambda}^{-1}(\bm{A})$, $\bm{A}\in\mathbb{R}^{N\times N}$. For the time derivative $\dot{\bm{K}}(\bm{\Sigma_{\tau}})$, we exploit $\tfrac{d}{dt}\bm{A}^{-1}=-\bm{A}^{-1}\dot{\bm{A}}\bm{A}^{-1}$,  and compute $\dot{\bm{K}}(\bm{\Sigma}_{\bm{\tau}})=\bm{K}_1 \tilde{\bm{K}}^{-1}(\bm{\Sigma}_{\bm{\tau}}) \bm{K}_3 \dot{\bm{\Sigma}}_{\bm{\tau}} \bm{K}_3 \tilde{\bm{K}}^{-1}(\bm{\Sigma}_{\bm{\tau}}) \bm{K}_1$, where $\tilde{\bm{K}}(\bm{\Sigma}_{\bm{\tau}})=\bm{K}_3(\bm{K}_2+\bm{\Sigma_{\tau}})\bm{K}_3+\bm{K}_1$. Finally, applying Lidskii's corollary \cite[III.4.6]{bhatia_ma} for the eigenvalues of the product of positive definite matrices, we arrive at \eqref{bound_dk_dt}.
\end{proof}

Using \eqref{bound_k_gain}, one can enforce the spectrum $\bm{\lambda}(\bm{K}(\bm{\Sigma_{\tau}}))$ of the adaptive gain \eqref{sigma_gain_adapt_law} to lie in an interval on the axis of positive reals, where $\bar{k}_1$ defines the maximum gain for $\bar{\lambda}(\bm{\Sigma_{\tau}})\to\infty$. We gather that $\bm{K}_3$ defines the scaling manner in which the gain follows $\bm{\Sigma_{\tau}}$. Moreover, $\bm{K}_2$ is typically chosen such that $0<\underline{k}_2<<1$ to specify a small positive minimum gain while also improving the condition number of the covariance. A desired positive lower bound $0<\underline{k}<\bar{k}=\underline{k}_1$ can be achieved by, e.g., fixing $\underline{k}_1,\underline{k}_2$ and setting $1/\underline{k}_3=\sqrt{\underline{k}_2(1/\underline{k}-1/\underline{k}_1)}$.

Having motivated and legitimated the design principles behind our proposed control law \eqref{c_law}--\eqref{des_dmp}, we now proceed to analyze the stability and convergence of its resulting closed-loop error dynamics \eqref{error_dyn} in the following subsection.

\subsection{Exponential Stability Analysis}

Due to the data-driven nature of the model, we consider its uncertainty probabilistically and arrive at a stability guarantee holding with high probability $1-\delta$, where $0<\delta\ll 1$. 
We now formulate the main theoretical result certifying exponential stability and convergence of the proposed control law \eqref{c_law}.

\begin{theorem}[Natural $\bm{\Sigma}$-adaptive PD+]\label{lgp_nat_pdp_theo1}
The L-GP-based closed-loop~\eqref{error_dyn} is exponentially stable to within the ball 
\begin{equation}
B(\varrho) \equiv \Big\{ \lVert[\bm{e},\dot{\bm{e}}](t)\rVert \leq \varrho(t) = \Delta\sqrt{\tfrac{\varepsilon/\vartheta + 1/\varphi}{2\underline{\mu}(t)}} \Big\} , \label{th1_ball}
\end{equation}
where the convergence rate $\alpha(t)\geq\underline{\alpha}\in\mathbb{R}^+$ is given by
\begin{equation}
\alpha(t) = \tfrac{ \lambda(\alpha(t),t)(\|\bm{e}\|^2 \!+\! \|\dot{\bm{e}}\|^2) + \varepsilon(\bm{e}^\top\!\hat{\bm{g}}_{\bm{e}} + \nu_{\bm{e}} + \omega_{\dot{\bm{e}}}) + \nu_{\dot{\bm{e}}} + \omega_{\bm{e}} }{\hat{G}(\bm{e})} \label{implct_eq_alpha}
\end{equation}
with the L-GP's potential energy $\hat{G}(\bm{e})$, an eigenvalue function $\lambda(\alpha(t),t)\!\in\!\mathbb{R}$, and $\nu_{\bm{e}} \!\!=\!\! h(\bm{e}^\top\!\hat{\bm{g}}_{\bm{q}})\bm{e}^\top\!\hat{\bm{g}}_{\bm{q}} \!\!\geq\!\! 0$, $\nu_{\dot{\bm{e}}} \!\!=\!\! h(\dot{\bm{e}}^\top\!\hat{\bm{d}}_{\dot{\bm{q}}})\dot{\bm{e}}^\top\!\hat{\bm{d}}_{\dot{\bm{q}}} \!\!\geq\!\! 0$ and $\omega_{\bm{e}} \!\!=\!\! h(\bm{e}^\top\!\hat{\bm{g}}_{\bm{q}})\dot{\bm{e}}^\top\!\bm{P}_{\bm{e}}\hat{\bm{g}}_{\bm{q}}$,  $\omega_{\dot{\bm{e}}} \!\!=\!\! h(\dot{\bm{e}}^\top\!\hat{\bm{d}}_{\dot{\bm{q}}})\bm{e}^\top\!\bm{P}_{\dot{\bm{e}}}\hat{\bm{d}}_{\dot{\bm{q}}}$ for $\hat{\bm{g}}_{\bm{q}}\coloneqq\hat{\bm{g}}({\bm{q}})$, $\hat{\bm{d}}_{\dot{\bm{q}}}\coloneqq\hat{\bm{d}}(\dot{\bm{q}})$ and the heaviside function $h(\cdot)$ from \eqref{heaviside_func}. The statement holds with an exact probability of
\begin{equation}
\mathrm{Pr}\left\{\lVert[\bm{e},\dot{\bm{e}}](t)\rVert \leq \varrho(t) + c_0e^{-\int_{t_0}^{t}\alpha(\tau)d\tau}\right\} = 1 - \delta \label{prob_theo1}
\end{equation}
for $\delta\in(0,1)$. Aside from the constants $\varepsilon,\vartheta,\Delta\in\mathbb{R}^+$, the radius $\varrho(t)$ is determined by the coordinate metric eigenvalue
\begin{equation}
\underline{\mu}(t) \!=\! \tfrac{\kappa + \hat{\underline{m}}(t)}{2} - \sqrt{[\tfrac{\kappa-\hat{\underline{m}}(t)}{2}]^2 \!+\! [\varepsilon\hat{\underline{m}}(t)]^2} + \tfrac{2\hat{G}(\bm{e})}{\|\bm{e}\|^2\!+\!\|\dot{\bm{e}}\|^2} \,. \label{mu_lb_th1}
\end{equation}
The virtual stiffness $\kappa\in\mathbb{R}^+$ and scale $\varphi\in\mathbb{R}^+$ are given by
\begin{subequations}
\begin{align}
\kappa &= \underline{k}_P + \varepsilon(\underline{d} - \underline{\alpha}\hat{m}_{\Sigma})  \label{opt_kappa} \\
\varphi &= 2[\underline{d} - \varepsilon(\underline{k}_P\!-\!\tfrac{\vartheta}{2})\!+\! \underline{\alpha}\kappa] - (\varepsilon\!+\!\underline{\alpha})\hat{m}_{\Sigma} \, , \label{opt_phi}  
\end{align}
\end{subequations}
where $\underline{d} \!=\! \underline{\hat{d}} \!+\! \underline{k}_D$, $\hat{m}_{\Sigma}\!=\!\underline{\hat{m}}\!+\!\bar{\hat{m}}$ are constant worst-case bounds, i.e., $\underline{d}\bm{I}\!\preceq\!\hat{\bm{D}}\!+\!\bm{K}_D$, $\underline{\hat{m}}\bm{I}\!\preceq\!\hat{\bm{M}}\!\preceq\!\bar{\hat{m}}\bm{I}$. The inequalities 
\begin{subequations}
\begin{align}
\varepsilon &< \min\!\left[\! \tfrac{\underline{d}}{\underline{k}_P-\vartheta/2+\hat{m}_{\Sigma}} , \, \tfrac{\underline{d} - \underline{\alpha}\hat{m}_{\Sigma}}{2\bar{\hat{m}}} \!\left(\! 1 \!+\! \sqrt{1\!\!+\!\!\tfrac{4\bar{\hat{m}}^2\underline{k}_P}{(\underline{d} - \underline{\alpha}\hat{m}_{\Sigma})^2}} \right) \! \right] \label{eps_intrvl_theo1}\\
\vartheta &< 2(\underline{k}_P+\hat{m}_{\Sigma}) \\
\underline{\alpha} &< \min\!\left[ \!\tfrac{\underline{d}}{\hat{m}_{\Sigma}} , \, \alpha_0 \!+\! \sqrt{ \alpha_0^2 \!+\! \tfrac{\underline{d}-\varepsilon(\underline{k}_P-\vartheta/2+\hat{m}_{\Sigma})}{\varepsilon\hat{m}_{\Sigma}} }  \right] \, .
\end{align}
\end{subequations}
are fulfilled by the parameters $\varepsilon,\vartheta$ and $\underline{\alpha}$ for validity of the statement, where $\alpha_0\!\coloneqq\!\tfrac{\underline{k}_P \!+\! \varepsilon \underline{d} \!-\! \hat{m}_{\Sigma}/2}{2\varepsilon\hat{m}_{\Sigma}}$.
\end{theorem}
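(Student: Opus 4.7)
The plan is to construct a cross-term Lyapunov function and apply contraction-style Lyapunov reasoning, combined with a high-probability L-GP uniform error bound. Concretely, I would use a candidate of the form
\begin{equation*}
V(t,\bm{e},\dot{\bm{e}}) = \dot{\bm{e}}^\top\hat{\bm{M}}(\bm{q})\dot{\bm{e}} + 2\hat{G}(\bm{e}) + \kappa\|\bm{e}\|^2 + 2\varepsilon\,\bm{e}^\top\hat{\bm{M}}(\bm{q})\dot{\bm{e}},
\end{equation*}
combining kinetic energy, the learned potential $\hat{G}$, a quadratic position term and the Slotine/Koditschek-style cross coupling weighted by $\varepsilon$. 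Writing the quadratic part as $[\bm{e}^\top,\dot{\bm{e}}^\top]\,M(\bm{q})\,[\bm{e}^\top,\dot{\bm{e}}^\top]^\top$ with block metric $M = \bigl[\kappa\bm{I},\,\varepsilon\hat{\bm{M}};\,\varepsilon\hat{\bm{M}},\,\hat{\bm{M}}\bigr]$, a $2\times 2$ block-eigenvalue formula produces exactly the first two terms of \eqref{mu_lb_th1}; adding $2\hat{G}(\bm{e})/(\|\bm{e}\|^2+\|\dot{\bm{e}}\|^2)$ yields the full $\underline{\mu}(t)$, so that $V\ge \underline{\mu}\|[\bm{e},\dot{\bm{e}}]\|^2$. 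The upper bound in \eqref{eps_intrvl_theo1} on $\varepsilon$ is precisely what keeps $M\succ\bm{0}$; the definition \eqref{opt_kappa} of $\kappa$ comes from collecting terms once $\dot V$ is expanded.

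Next I would differentiate $V$ along trajectories of \eqref{error_dyn}. The skew-symmetry of $\dot{\hat{\bm{M}}}-2\hat{\bm{C}}$ (guaranteed by the construction of $\hat{\bm{C}}$ in Section~\ref{gpr_sec}) eliminates the $\dot{\hat{\bm{M}}}$ derivative and leaves a damping contribution $-2\dot{\bm{e}}^\top(\hat{\bm{D}}+\bm{K}_D)\dot{\bm{e}}$ and a coupling $-2\varepsilon\bm{e}^\top\bm{K}_P\bm{e}$, with Lemma~\ref{lemma_one} controlling the drift from $\dot{\bm{K}}_P(\bm{\Sigma}_{\bm\tau})$. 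The projector identities $\bm{P}_{\bm{e}}\bm{e}=\bm{e}$ and $\bm{P}_{\dot{\bm{e}}}\dot{\bm{e}}=\dot{\bm{e}}$ applied to the Heaviside-gated contributions in \eqref{error_dyn} produce exactly the nonnegative quantities $\nu_{\bm{e}},\nu_{\dot{\bm{e}}}$ and the signed projection terms $\omega_{\bm{e}},\omega_{\dot{\bm{e}}}$ appearing in \eqref{implct_eq_alpha}. The data-driven residual $\tilde{\bm{\tau}}$ is bounded uniformly in norm by $\Delta$ with probability $1-\delta$ via a standard L-GP concentration result, after which Young's inequality with parameters $\vartheta,\varphi$ splits the cross term $\tilde{\bm{\tau}}^\top(\dot{\bm{e}}+\varepsilon\bm{e})$: the quadratic penalties $\tfrac{\varepsilon\vartheta}{2}\|\bm{e}\|^2$ and $\tfrac{\varphi}{2}\|\dot{\bm{e}}\|^2$ are absorbed into $\kappa$ and the damping (explaining the $\vartheta/2$ and $\varphi$ appearing in \eqref{opt_kappa}--\eqref{opt_phi}), leaving only the scalar perturbation $\tfrac{\Delta^2}{2}(\varepsilon/\vartheta+1/\varphi)$.

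Collecting terms I expect an inequality of the form $\dot V + 2\alpha(t)V \le \Delta^2(\varepsilon/\vartheta+1/\varphi)$, where $\alpha(t)$ must be chosen so that the remaining non-damped portion of $\dot V$ is dominated. The fixed-point structure of \eqref{implct_eq_alpha} arises because the metric is itself state-dependent through $\hat{\bm{M}}(\bm{q})$ and $\bm{K}_P(\bm{\Sigma}_{\bm\tau}(t))$: the term $\lambda(\alpha(t),t)$ captures the eigenvalue correction of $\dot M$ to the quadratic part (bounded by a Lidskii-type argument as in the proof of Lemma~\ref{lemma_one}), while the numerator of \eqref{implct_eq_alpha} lumps the projector-gated contributions and the potential gradient, divided by $\hat{G}(\bm{e})$ since these absorb into the $2\hat{G}$ component of $V$. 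Applying the comparison lemma to $\dot V + 2\alpha V \le \text{const}$, together with $V\ge \underline{\mu}\|[\bm{e},\dot{\bm{e}}]\|^2$, then yields the ball \eqref{th1_ball} and the exponential estimate \eqref{prob_theo1}, with $c_0$ coming from $V(t_0)/\underline{\mu}(t_0)$.

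The main obstacle I anticipate is the implicit definition \eqref{implct_eq_alpha} of $\alpha(t)$ and proving the uniform lower bound $\alpha(t)\ge\underline{\alpha}$ over all reachable $(\bm{e},\dot{\bm{e}})$. Because the projector terms $\omega_{\bm{e}},\omega_{\dot{\bm{e}}}$ are sign-indefinite and the state-dependent metric feeds back into $\lambda(\alpha,t)$, one has to verify that the eligible range for $\varepsilon,\vartheta,\underline{\alpha}$ produces a solvable fixed-point with a strictly positive solution; this is precisely what the parameter constraints in \eqref{eps_intrvl_theo1} and the subsequent inequalities on $\vartheta$ and $\underline{\alpha}$ are designed to guarantee, essentially by forcing the quadratic-in-$\underline{\alpha}$ discriminant to remain positive while keeping $\kappa$ and $\varphi$ above the thresholds required for $\underline{\mu}>0$.
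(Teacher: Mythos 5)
Your overall architecture is the same as the paper's: the cross-term metric \eqref{coord_metric} with stiffness $\bm{\mathcal{K}}=\hat{\bm{\mathcal{K}}}_G+\kappa\bm{I}$ (your $2\hat{G}(\bm{e})+\kappa\|\bm{e}\|^2$ term), Young's inequality with parameters $\vartheta,\varphi$ on the residual cross term, a high-probability L-GP bound, and a Corless/comparison-lemma argument. However, there are two concrete gaps. First, your treatment of the model error does not reproduce the stated ball: you bound $\lVert\tilde{\bm{\tau}}\rVert\leq\Delta$ uniformly, which gives $\dot{V}\leq-2\alpha(t)V+\tfrac{\Delta^2}{2}(\varepsilon/\vartheta+1/\varphi)$ and hence an ultimate bound $V\lesssim\tfrac{\Delta^2}{4\alpha(t)}(\varepsilon/\vartheta+1/\varphi)$, i.e., a radius carrying an extra factor $1/\sqrt{\alpha}$ compared to \eqref{th1_ball}. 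In the paper, $\Delta$ is \emph{not} a uniform bound on $\tilde{\bm{\tau}}$; it is defined through the probabilistic condition $\beta\sqrt{\underline{\lambda}(\bm{\Sigma}_{\bm{\tau}}(t))+s}+Ld\leq\Delta\sqrt{\alpha(t)}$ (part of the region $\bm{\mathcal{E}}$), so the perturbation term is proportional to $\alpha(t)$ and one obtains $\dot{V}\leq-2\alpha(t)\big[V-(\tfrac{\varepsilon}{\vartheta}+\tfrac{1}{\varphi})\tfrac{\Delta^2}{4}\big]$ with a \emph{constant} offset, which is exactly what yields \eqref{radius_rho_sq} and thus \eqref{th1_ball} after invoking Lemma~\ref{lemma_met}/Corollary~\ref{coro_met} for $\underline{\mu}(t)$ in \eqref{mu_lb_th1}.

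Second, the quantitative core of the statement --- the implicit rate equation \eqref{implct_eq_alpha} with an explicit $\lambda(\alpha,t)$, the optimal parameters \eqref{opt_kappa}--\eqref{opt_phi}, and the admissible ranges \eqref{eps_intrvl_theo1} ff.\ guaranteeing $\alpha(t)\geq\underline{\alpha}>0$ --- is precisely what you defer as the ``main obstacle,'' so it remains unproven in your proposal. The paper closes this by inserting \eqref{coord_metric} and \eqref{A_pi} into the LMI \eqref{lmi_ct_ball}, cancelling $\dot{\hat{\bm{\mathcal{K}}}}_G$ via $\tfrac{1}{2}\bm{e}^\top\dot{\hat{\bm{\mathcal{K}}}}_G\bm{e}=\dot{\bm{e}}^\top(\hat{\bm{K}}_G-\hat{\bm{\mathcal{K}}}_G)\bm{e}$, decomposing the resulting block matrix in \eqref{decomp_mat_ct} into $\bm{\Upsilon}(\alpha)$, the projector-gated part (which produces $\nu_{\bm{e}},\nu_{\dot{\bm{e}}},\omega_{\bm{e}},\omega_{\dot{\bm{e}}}$), and $\bm{R}$; Weyl's inequality then reduces the matrix condition to the scalar inequality \eqref{min_eig_val_cond_upsilon}, the spectrum of $\bm{\Upsilon}(\alpha)$ is computed \emph{exactly} via the block-determinant formula for commuting blocks in \eqref{eig_upsilon}, the sphere-centering choices $b=\varepsilon\alpha\hat{m}_{\Sigma}$ and $\gamma=a+(\varepsilon+\alpha)\hat{m}_{\Sigma}/2$ give \eqref{opt_kappa}--\eqref{opt_phi}, and solving the resulting quadratic yields the explicit rate \eqref{alpha_t_quad_sol}. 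Relatedly, your attribution of $\lambda(\alpha,t)$ to a Lidskii-type bound on the metric derivative is off: the Lidskii step belongs to Lemma~\ref{lemma_one} (bounding $\dot{\bm{K}}(\bm{\Sigma}_{\bm{\tau}})$), whereas in the theorem $\lambda(\alpha,t)=\underline{\lambda}(\bm{\Upsilon}(\alpha))+\underline{\lambda}(\bm{R})$. Without these steps, the specific formulas asserted in the theorem --- rather than a generic exponential ISS-type bound --- are not established.
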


\begin{remark}
Note that the implicit equation \eqref{implct_eq_alpha} can be solved explicitly for the rate $\alpha(t)$. An analytical expression, also for the eigenvalue function $\lambda\colon\mathbb{R}^+\times\mathbb{R}\to\mathbb{R}$, is derived and given in the proof of Theorem~\ref{lgp_nat_pdp_theo1}.
\end{remark}

For the proof of Theorem~\ref{lgp_nat_pdp_theo1}, we apply ideas from contraction theory \cite{lohmiller_ct}, revolving around the problem of finding a coordinate transformation, to construct an optimal Lyapunov function, which allows the provision of a maximal convergence rate and a minimal ball radius. This enables the derivation of a strong theoretical guarantee, which depends on a conservative usage of Young's inequality only once each where it is inevitable, i.e., where position and velocity are multiplied with the model error. The repeated application of these conservative inequalities, as in, e.g., \cite{thomas_tracking_control_of_el_systems}, for bounds on Lyapunov functions and Coriolis terms, can potentially lead to infeasible guarantees. To circumvent these issues, we build majorly on matrix analysis arguments, for which we provide the following sufficient result as an intermediate step.

\begin{lemma}\label{lemma_met}
The symmetric coordinate metric
\begin{equation}
\bm{\mathcal{M}} = \begin{bmatrix} \bm{\mathcal{K}} & \varepsilon \hat{\bm{M}} \\ \varepsilon \hat{\bm{M}} & \hat{\bm{M}} \end{bmatrix} , \label{coord_metric}
\end{equation}
with a real constant $\varepsilon\in\mathbb{R}$, design stiffness $\bm{0}\prec\underline{\kappa}\bm{I}\preceq\bm{\mathcal{K}}\preceq\bar{\kappa}\bm{I}$ and L-GP inertia estimate $\bm{0}\prec\underline{\hat{m}}\bm{I}\preceq\hat{\bm{M}}\preceq\bar{\hat{m}}\bm{I}$, is uniformly positive definite, i.e., $\bm{\mathcal{M}}\succ\bm{0} \, \forall t$, iff $|\varepsilon| < \sqrt{\underline{\kappa}/\bar{\hat{m}}}$. Moreover, the metric \eqref{coord_metric} is guaranteed to fulfill the LMI bounds
\begin{subequations}\label{lmi_bounds_metric}
\begin{align}
\bm{0} &\prec \underline{\mu}\bm{I} \preceq \bm{\mathcal{M}} \preceq \bar{\mu}\bm{I} \label{lmi_metric} \\
\underline{\bar{\mu}} &= \tfrac{1}{2}\left(\underline{\bar{\kappa}}+\underline{\bar{\hat{m}}} \pm \sqrt{(\underline{\bar{\kappa}}-\underline{\bar{\hat{m}}})^2 + (2\varepsilon\bar{\hat{m}})^2}\right) \label{mu_bar_unli}
\end{align}
\end{subequations}
for nonzero values $0 < |\varepsilon| < \sqrt{\underline{\hat{m}}\underline{\kappa}} / \bar{\hat{m}}$.
\end{lemma}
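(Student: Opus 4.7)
The plan is to split the proof into two parts and address each via standard block-matrix tools. For the iff characterization of uniform positive definiteness, I would invoke the Schur complement: since $\hat{\bm{M}}\succ\bm{0}$ by assumption, the block matrix $\bm{\mathcal{M}}$ is positive definite iff its Schur complement with respect to the $(2,2)$-block is positive definite, i.e., $\bm{\mathcal{K}}-\varepsilon^2\hat{\bm{M}}\succ\bm{0}$. Using the spectral bounds $\bm{\mathcal{K}}\succeq\underline{\kappa}\bm{I}$ and $\hat{\bm{M}}\preceq\bar{\hat{m}}\bm{I}$ yields $\bm{\mathcal{K}}-\varepsilon^2\hat{\bm{M}}\succeq(\underline{\kappa}-\varepsilon^2\bar{\hat{m}})\bm{I}$, so that $|\varepsilon|<\sqrt{\underline{\kappa}/\bar{\hat{m}}}$ is sufficient; necessity follows by taking the admissible worst-case instances $\bm{\mathcal{K}}=\underline{\kappa}\bm{I}$ and $\hat{\bm{M}}=\bar{\hat{m}}\bm{I}$, which render the Schur complement indefinite whenever $|\varepsilon|\geq\sqrt{\underline{\kappa}/\bar{\hat{m}}}$.

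For the LMI sandwich bounds \eqref{lmi_bounds_metric}, the second step is to work directly with the quadratic form of $\bm{\mathcal{M}}$ applied to an arbitrary test vector $\bm{v}=[\bm{v}_1^\top,\bm{v}_2^\top]^\top\in\mathbb{R}^{2N}$, namely $\bm{v}^\top\bm{\mathcal{M}}\bm{v}=\bm{v}_1^\top\bm{\mathcal{K}}\bm{v}_1+2\varepsilon\bm{v}_1^\top\hat{\bm{M}}\bm{v}_2+\bm{v}_2^\top\hat{\bm{M}}\bm{v}_2$. The diagonal terms are sandwiched by the eigenvalue bounds on $\bm{\mathcal{K}}$ and $\hat{\bm{M}}$ multiplied by $\|\bm{v}_1\|^2$ and $\|\bm{v}_2\|^2$, respectively, while the off-diagonal cross term is controlled via Cauchy-Schwarz and the spectral norm, $|\bm{v}_1^\top\hat{\bm{M}}\bm{v}_2|\leq\bar{\hat{m}}\|\bm{v}_1\|\|\bm{v}_2\|$. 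This reduces each side of \eqref{lmi_metric} to a homogeneous quadratic form in the two scalars $(\|\bm{v}_1\|,\|\bm{v}_2\|)$ whose $2\times 2$ coefficient matrices differ only in their diagonal entries ($\underline{\kappa},\underline{\hat{m}}$ versus $\bar{\kappa},\bar{\hat{m}}$) and in the sign of the off-diagonal term $\mp|\varepsilon|\bar{\hat{m}}$. The closed-form expression \eqref{mu_bar_unli} then drops out as the standard min/max eigenvalue formula of a symmetric $2\times 2$ matrix.

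The main obstacle is tracking the sign of $\varepsilon$ consistently and establishing strict positivity of $\underline{\mu}$. The former is handled cleanly by absorbing the sign into $|\varepsilon|$ at the Cauchy-Schwarz step, which naturally yields $-|\varepsilon|\bar{\hat{m}}$ as the off-diagonal entry of the lower-sandwich $2\times 2$ matrix and $+|\varepsilon|\bar{\hat{m}}$ in the upper one. Strict positivity $\underline{\mu}>0$ then demands the discriminant under the root in \eqref{mu_bar_unli} to be strictly dominated by $\big((\underline{\kappa}+\underline{\hat{m}})/2\big)^2$; a direct squaring and cancellation of the $(\underline{\kappa}-\underline{\hat{m}})^2$ terms collapses this to $\varepsilon^2\bar{\hat{m}}^2<\underline{\kappa}\,\underline{\hat{m}}$, i.e., $|\varepsilon|<\sqrt{\underline{\hat{m}}\,\underline{\kappa}}/\bar{\hat{m}}$, which is exactly the strictly stronger second condition of the lemma, consistent with additionally demanding a positive lower spectral bound rather than mere positive definiteness.
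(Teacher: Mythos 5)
Your proposal is correct, and it is worth noting where it coincides with and where it departs from the paper. For the positive-definiteness characterization you use the same Schur-complement reduction to $\bm{\mathcal{K}}-\varepsilon^2\hat{\bm{M}}\succ\bm{0}$ as the paper, which then passes to the scalar condition $\varepsilon^2\bar{\hat{m}}<\underline{\kappa}$ via Weyl's inequalities; your worst-case instances $\bm{\mathcal{K}}=\underline{\kappa}\bm{I}$, $\hat{\bm{M}}=\bar{\hat{m}}\bm{I}$ make the necessity direction of the ``iff'' explicit, which the paper leaves implicit. For the LMI bounds \eqref{lmi_bounds_metric}, however, your route is genuinely different: you scalarize the quadratic form with Cauchy--Schwarz, $|\bm{v}_1^\top\hat{\bm{M}}\bm{v}_2|\leq\bar{\hat{m}}\lVert\bm{v}_1\rVert\lVert\bm{v}_2\rVert$, and read off $\underline{\mu},\bar{\mu}$ as the extremal eigenvalues of the resulting $2\times2$ comparison matrices in $(\lVert\bm{v}_1\rVert,\lVert\bm{v}_2\rVert)$, with the strict positivity condition $|\varepsilon|<\sqrt{\underline{\hat{m}}\underline{\kappa}}/\bar{\hat{m}}$ dropping out of the discriminant exactly as you state. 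The paper instead argues by cases: for $\hat{\bm{M}}\neq\bm{\mathcal{K}}$ it only asserts the existence of admissible $\underline{\mu},\bar{\mu}$, and it verifies the closed form \eqref{mu_bar_unli} analytically only in the special case $\hat{\bm{M}}=\bm{\mathcal{K}}$, using the $2\times2$ block-determinant formula to obtain the spectrum $(1\pm\varepsilon)\bm{\lambda}(\hat{\bm{M}})$ (the same device it reuses in Corollary~\ref{coro_met}). Your argument is therefore more self-contained and establishes \eqref{mu_bar_unli} uniformly over all admissible $\bm{\mathcal{K}},\hat{\bm{M}}$, at the price of being a (generally non-tight) bound, whereas the paper's determinant computation yields exact eigenvalues but only in its special cases.
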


\begin{proof}
The proof is given in Appendix~\ref{l3_p_appendix}.
\end{proof}

\noindent Next, we provide an exact and necessary condition in the simplified case of an Euclidian virtual stiffness.

\begin{corollary}\label{coro_met}
For Euclidian $\bm{\mathcal{K}}=\kappa\bm{I}\succ\bm{0}$ with $\kappa\in\mathbb{R}^+$ and $\hat{\bm{m}}\coloneqq\bm{\lambda}(\hat{\bm{M}})$, all eigenvalues of the metric \eqref{coord_metric} are given by 
\begin{equation}
\bm{\lambda}(\bm{\mathcal{M}}) = \tfrac{1}{2}\left(\kappa+\hat{\bm{m}} \pm \sqrt{(\kappa-\hat{\bm{m}})^2 + (2\varepsilon\hat{\bm{m}})^2}\right) \label{eig_vals_metric_eucl_K},
\end{equation}
which are positive for $|\varepsilon|<\sqrt{\kappa/\bar{\hat{m}}}$.
\end{corollary}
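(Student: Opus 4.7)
The plan is to exploit the fact that when $\bm{\mathcal{K}} = \kappa \bm{I}$, the upper-left block commutes with $\hat{\bm{M}}$, so the entire $2N \times 2N$ matrix $\bm{\mathcal{M}}$ can be simultaneously block-diagonalized using $\hat{\bm{M}}$'s spectral decomposition, reducing the problem to $N$ independent $2 \times 2$ eigenvalue problems, each sharing the same analytical structure.

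First, I would write $\hat{\bm{M}} = \bm{U}\,\mathrm{diag}(\hat{\bm{m}})\,\bm{U}^\top$ with orthogonal $\bm{U}$, and observe that
\begin{equation*}
\bm{\mathcal{M}} \;=\; \begin{bmatrix}\bm{U} & \bm{0}\\ \bm{0} & \bm{U}\end{bmatrix} \begin{bmatrix} \kappa\bm{I} & \varepsilon\,\mathrm{diag}(\hat{\bm{m}})\\ \varepsilon\,\mathrm{diag}(\hat{\bm{m}}) & \mathrm{diag}(\hat{\bm{m}})\end{bmatrix}\begin{bmatrix}\bm{U}^\top & \bm{0}\\ \bm{0} & \bm{U}^\top\end{bmatrix},
\end{equation*}
since $\kappa\bm{I}$ is proportional to the identity and therefore invariant under conjugation by $\bm{U}$. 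A standard permutation (interleaving the two copies of coordinates) then puts the inner matrix into block-diagonal form with $N$ blocks of the type $\bigl[\begin{smallmatrix}\kappa & \varepsilon\hat{m}_i\\ \varepsilon\hat{m}_i & \hat{m}_i\end{smallmatrix}\bigr]$, one for each eigenvalue $\hat{m}_i$ of $\hat{\bm{M}}$. Since similarity and permutation preserve the spectrum, $\bm{\lambda}(\bm{\mathcal{M}})$ is the union of the eigenvalues of these $2 \times 2$ blocks.

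Next, solving the quadratic characteristic equation $\lambda^2 - (\kappa + \hat{m}_i)\lambda + \kappa\hat{m}_i - \varepsilon^2 \hat{m}_i^2 = 0$ for each $i$ yields immediately the closed-form expression \eqref{eig_vals_metric_eucl_K}, after simplifying the discriminant $(\kappa+\hat{m}_i)^2 - 4(\kappa\hat{m}_i - \varepsilon^2 \hat{m}_i^2) = (\kappa - \hat{m}_i)^2 + (2\varepsilon\hat{m}_i)^2$. This is the main computational step, but it is routine.

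For the positivity claim, the larger root is trivially positive, so it suffices to force the smaller root $\tfrac{1}{2}(\kappa + \hat{m}_i - \sqrt{(\kappa-\hat{m}_i)^2 + (2\varepsilon\hat{m}_i)^2})$ to be positive. Squaring the inequality $\kappa + \hat{m}_i > \sqrt{(\kappa-\hat{m}_i)^2 + (2\varepsilon\hat{m}_i)^2}$ (valid since both sides are positive) collapses to $\kappa > \varepsilon^2 \hat{m}_i$, i.e., $|\varepsilon| < \sqrt{\kappa/\hat{m}_i}$. Taking the worst case $\hat{m}_i = \bar{\hat{m}}$ gives the stated necessary and sufficient condition $|\varepsilon| < \sqrt{\kappa/\bar{\hat{m}}}$. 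There is no real obstacle here; the only subtlety is recognizing up front that the Euclidean choice of $\bm{\mathcal{K}}$ is precisely what enables the simultaneous block-diagonalization, which Lemma~\ref{lemma_met} could only bound by interlacing arguments.
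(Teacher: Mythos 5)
Your proof is correct, but it reaches \eqref{eig_vals_metric_eucl_K} by a different mechanism than the paper. The paper stays at the level of characteristic polynomials: it uses the commutation of $\varepsilon\hat{\bm{M}}$ with $\hat{\bm{M}}-\lambda\bm{I}$ together with the block-determinant formula for commuting blocks \cite{silvester_block_det} to write $\det(\bm{\mathcal{M}}-\lambda\bm{I})=\det[(\kappa-\lambda)\hat{\bm{M}}-\varepsilon^2\hat{\bm{M}}^2-\lambda(\kappa-\lambda)\bm{I}]$, then observes that this vanishes iff $\lambda(\kappa-\lambda)$ is an eigenvalue of $(\kappa-\lambda)\hat{\bm{M}}-\varepsilon^2\hat{\bm{M}}^2$, which for each $\hat{\mu}\in\bm{\lambda}(\hat{\bm{M}})$ yields the scalar quadratic \eqref{quad_eig_val}. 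You instead diagonalize $\hat{\bm{M}}=\bm{U}\,\mathrm{diag}(\hat{\bm{m}})\,\bm{U}^\top$, conjugate $\bm{\mathcal{M}}$ by $\mathrm{blkdiag}(\bm{U},\bm{U})$ (legitimate precisely because $\kappa\bm{I}$ is isotropic), and permute into $N$ independent $2\times2$ blocks $\bigl[\begin{smallmatrix}\kappa & \varepsilon\hat{m}_i\\ \varepsilon\hat{m}_i & \hat{m}_i\end{smallmatrix}\bigr]$ — the resulting per-block quadratic is algebraically identical to \eqref{quad_eig_val}, so the two arguments converge to the same computation. Your route is more constructive (it also delivers the eigenvectors and makes the pairing of each $\hat{m}_i$ with its two metric eigenvalues explicit) and avoids invoking the block-determinant theorem, at the cost of the explicit spectral factorization; the paper's route is shorter and reuses the same commuting-blocks lemma it already needs in the proof of Lemma~\ref{lemma_met}. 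Your positivity argument (squaring $\kappa+\hat{m}_i>\sqrt{(\kappa-\hat{m}_i)^2+(2\varepsilon\hat{m}_i)^2}$ to get $\kappa>\varepsilon^2\hat{m}_i$, then taking the worst case $\bar{\hat{m}}$) is sound and in fact establishes necessity as well as sufficiency, which the paper's corollary only states in the sufficient direction.
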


\begin{proof}
Please see Appendix~\ref{c1_p_appendix}.
\end{proof}

\noindent We are now ready to prove the main result.

\begin{proof_th}
To find an optimal Lyapunov function allowing the provision of a maximally tight guarantee, let us first consider the generalized candidate
\begin{equation}
V(\bm{e},\dot{\bm{e}},t) = \tfrac{1}{2} \begin{bmatrix} \bm{e}^\top \!\!&\!\! \dot{\bm{e}}^\top \end{bmatrix} \bm{\mathcal{M}}(\bm{e},\dot{\bm{e}},t) \begin{bmatrix} \bm{e} \\ \dot{\bm{e}} \end{bmatrix} \eqcolon \tfrac{1}{2} \bm{z}^\top\bm{z} \, , \label{V_ct_relatship}
\end{equation} 
representing the squared length of some adequately transformed error coordinates $\bm{z}\coloneqq\bm{\Theta}\bm{x}$. Thus, we implicitly define a positive definite and continuously differentiable metric
\begin{equation}
\bm{\mathcal{M}}(\bm{e},\dot{\bm{e}},t)=\bm{\Theta}^\top\bm{\Theta} \label{metric_lyap}
\end{equation}
in whose Riemann space the squared length of  $\bm{x}^\top\coloneq[\bm{e}^\top \, \dot{\bm{e}}^\top]$ is investigated. Contrary to the typical formulation based on virtual displacements $\delta\bm{x}$, requiring the computation of the system's Jacobian, we exploit the structure of \eqref{error_dyn} reformulating to $\dot{\bm{x}}=\bm{A}(\bm{x},t)\bm{x} + \bm{\pi}(\bm{x},t)$ with nonlinear system matrix
\begin{equation}
\bm{A}^\top(\bm{x},t) = \begin{bmatrix} \bm{0} & -\hat{\bm{M}}^{-1}(\tfrac{\nu_{\bm{e}}}{\|\bm{e}\|^2}\bm{I} \!+\! \hat{\bm{K}}_G \!+\! \bm{K}_P) \\ \bm{I} & -\hat{\bm{M}}^{-1}(\hat{\bm{C}} \!+\! \tfrac{\nu_{\dot{\bm{e}}}}{\|\dot{\bm{e}}\|^2}\bm{I} \!+\! \hat{\bm{D}} \!+\! \bm{K}_D) \end{bmatrix} \label{A_pi}
\end{equation}
and perturbation term $\bm{\pi}\!\!=\!\![\bm{0}, -\hat{\bm{M}}^{-1}\tilde{\bm{\tau}}]$. For notational brevity, we have substituted $\nu_{\bm{e}}$ and $\nu_{\dot{\bm{e}}}$ from \eqref{implct_eq_alpha} here. Now, we directly use $\bm{A}$ and the full state $\bm{x}$ to find a metric \eqref{metric_lyap} for which 
\begin{equation}
\bm{\mathcal{F}} \coloneqq \bm{A}^\top\bm{\mathcal{M}} + \bm{\mathcal{M}}\bm{A} + \hspace{0.1cm}\dot{\hspace{-0.1cm}\bm{\mathcal{M}}} \preceq -2\alpha\bm{\mathcal{M}} \label{lmi_ct}
\end{equation}
holds uniformly $\forall t$. Inspecting \eqref{A_pi}, we observe the benefit of multiplying with $\hat{\bm{M}}$ in the second line for reducing the complexity of satisfying \eqref{lmi_ct}. Therefore, \eqref{coord_metric} is postulated as a fitting metric, where we have introduced a stiffness $\bm{\mathcal{K}}$, which will be fully specified later on. In order to prove the result, we combine \eqref{lmi_ct} from \cite[Theorem 2]{lohmiller_ct} with \cite[Theorem 2.1]{corless}. Thus, computing the derivative of \eqref{V_ct_relatship} w.r.t. to~\eqref{coord_metric} gives
\begin{equation}
\dot{V} = \tfrac{1}{2} \bm{x}^\top \bm{\mathcal{F}} \bm{x} + \bm{x}^\top\bm{\mathcal{M}}\bm{\pi} = \tfrac{1}{2} \bm{x}^\top \bm{\mathcal{F}} \bm{x} - (\varepsilon\bm{e}+\dot{\bm{e}})^\top\tilde{\bm{\tau}} \notag
\end{equation}
Since $\tilde{\bm{\tau}}$ is a time-varying disturbance independent of the error states $\bm{x}$, we need to upper bound the second term to bring the entire expression into the form $\dot V\leq -2\alpha(V-\underline{V})$ with constant $\underline{V}>0$ such that \cite[Theorem 2.1]{corless} is applicable. Using the Cauchy-Schwarz, triangle and Young's inequalities,
\begin{align}
- (\varepsilon\bm{e}+\dot{\bm{e}})^\top\tilde{\bm{\tau}} &\leq (|\varepsilon|\lVert\bm{e}\rVert+\lVert\dot{\bm{e}}\rVert)\lVert\tilde{\bm{\tau}}\rVert \notag\\
&\leq \tfrac{|\varepsilon|\vartheta}{2}\lVert\bm{e}\rVert^2 + \tfrac{\varphi}{2}\lVert\dot{\bm{e}}\rVert^2 + (\tfrac{|\varepsilon|}{\vartheta} + \tfrac{1}{\varphi}) \tfrac{\lVert\tilde{\bm{\tau}}\rVert^2}{2}
\end{align}
follows for positive constants $\vartheta,\varphi\in\mathbb{R}^+$. Next, we include the first two summands into the quadratic expression w.r.t.~$\bm{\mathcal{F}}$ and apply \cite[Lemma 2]{my_tro_article} for $\beta\sqrt{\underline{\lambda}(\bm{\Sigma}_{\bm{\tau}}(t))+s}+Ld \leq \Delta\sqrt{\alpha(t)}$ to eliminate the time-dependency of the remaining bias. Thus,
\begin{equation}
\mathrm{Pr}\!\left\{\dot{V} \leq \tfrac{1}{2} \bm{x}^\top (\bm{\mathcal{F}} \!+\! |\varepsilon|\vartheta\bm{I} \!\oplus\! \varphi\bm{I}) \bm{x} \!+\! \tfrac{\alpha}{2} (\tfrac{|\varepsilon|}{\vartheta} \!+\! \tfrac{1}{\varphi})\Delta^2\right\} = 1-\delta \notag \, ,
\end{equation}
which is equivalent to requiring for \cite[Theorem 2.1]{corless} that 
\begin{equation}
\mathrm{Pr}\Big\{\dot{V} \leq -2\alpha(t) \big[V - \big(\tfrac{|\varepsilon|}{\vartheta} + \tfrac{1}{\varphi}\big)\tfrac{\Delta^2}{4}\big] \Big\} = 1-\delta \label{prob_exp_conv_prf}
\end{equation}
given that the matrix inequality
\begin{equation}
\bm{A}^\top\bm{\mathcal{M}} + \bm{\mathcal{M}}\bm{A} + \hspace{0.1cm}\dot{\hspace{-0.1cm}\bm{\mathcal{M}}} + |\varepsilon|\vartheta\bm{I} \oplus \varphi\bm{I} \preceq -2\alpha(t)\bm{\mathcal{M}} \label{lmi_ct_ball}
\end{equation}
is fulfilled deterministically. Here, note that we make use of an extension of \cite[Theorem 2.1]{corless} based on Grönwall's inequality to include a more precise time-variant convergence rate $\alpha(t)\geq\underline{\alpha}\in\mathbb{R}^+$. Therefore, analyzing under which conditions \eqref{lmi_metric} and \eqref{lmi_ct_ball} hold is all that is left to prove exponential stability and convergence to the ball \eqref{th1_ball}--\eqref{prob_theo1}. The former is necessary for expressing the ball's radius as
\begin{equation}
\varrho^2(t) = \tfrac{2\underline{V}}{\underline{\mu}(t)} = \tfrac{|\varepsilon|/\vartheta + 1/\varphi}{2\underline{\mu}(t)}\Delta^2 \, , \label{radius_rho_sq}
\end{equation}
for which Lemma~\ref{lemma_met} or Corollary~\ref{coro_met} can be used, depending on the form of $\bm{\mathcal{K}}$. Proceeding with the latter condition, we plug \eqref{coord_metric} and \eqref{A_pi} into \eqref{lmi_ct_ball}, reformulate, and obtain 
\begin{align}
\bm{0} &\preceq \begin{bmatrix} \underbrace{\varepsilon\bm{P}^\prime \!-\! \alpha\bm{\mathcal{K}} \!-\! \tfrac{\dot{\bm{\mathcal{K}}}}{2} \!-\! \tfrac{|\varepsilon|\vartheta}{2}\bm{I}}_{=\bm{P}} & \underbrace{\tfrac{1}{2}\bm{B}^\prime \!-\! \tfrac{\varepsilon}{2}\hat{\bm{C}}^\top}_{=\bm{B}} \\ \underbrace{\tfrac{1}{2}\bm{B}^\prime \!-\! \tfrac{\varepsilon}{2}\hat{\bm{C}}}_{=\bm{B}^{\top}} & \underbrace{\bm{E}^\prime \!-\! (\varepsilon\!+\!\alpha)\hat{\bm{M}} \!-\! \tfrac{\varphi}{2}\bm{I}}_{=\bm{E}} \end{bmatrix} \label{full_mat_ct}\\
\bm{P}^\prime &= \tfrac{\nu_{\bm{e}}}{\|\bm{e}\|^2}\bm{I} + \hat{\bm{K}}_G + \bm{K}_P\, , \quad \bm{E}^\prime = \tfrac{\nu_{\dot{\bm{e}}}}{\|\dot{\bm{e}}\|^2}\bm{I} + \hat{\bm{D}} + \bm{K}_D \, , \notag \\
\bm{B}^\prime &= \bm{P}^\prime-\bm{\mathcal{K}} + \varepsilon (\bm{E}^\prime - 2\alpha\hat{\bm{M}}) \, . \notag
\end{align}
Leveraging Schur complements~\cite{boyd_lmis} and symmetric eigenvalue principles~\cite{bhatia_ma}, we can derive the sufficient conditions
\begin{subequations}
\begin{align}
\bm{P} &\succ \bm{0} \Rightarrow \bm{E} \succeq \bm{B}^{\top} \bm{P}^{-1}\bm{B} \Leftrightarrow \underline{\lambda}(\bm{E}) \geq \tfrac{[\bar{\lambda}(\bm{B}^\prime) + |\varepsilon|\lVert \hat{\bm{C}} \rVert ]^2}{4\underline{\lambda}(\bm{P})} \notag \\
\bm{E} &\succ \bm{0} \Rightarrow \!\bm{P} \succeq \bm{B}\bm{E}^{-1}\bm{B}^{\top} \Leftrightarrow \underline{\lambda}(\!\bm{P}) \geq \tfrac{[\bar{\lambda}(\bm{B}^\prime) + |\varepsilon|\lVert \hat{\bm{C}} \rVert ]^2}{4\underline{\lambda}(\bm{E})} \notag
\end{align}
\end{subequations} 
which are unified compactly to $\bm{P},\bm{E} \succ \bm{0}$ and $\bm{B}^\prime \succ \bm{0}$ with 
\begin{equation}
4\underline{\lambda}(\bm{P})\underline{\lambda}(\bm{E}) \geq [\bar{\lambda}(\bm{B}^\prime) + |\varepsilon|\lVert \hat{\bm{C}} \rVert ]^2 \, . \notag
\end{equation}
In order to maximize the convergence rate $\alpha$ and region $\bar{\varrho}$, we thus in turn aim to maximize $\underline{\lambda}(\bm{P}),\underline{\lambda}(\bm{E})$ while minimizing $\bar{\lambda}(\bm{B}^\prime)$ subject to $\underline{\lambda}(\bm{B}^\prime)>0$. Therefore, we choose
\begin{equation}
\bm{\mathcal{K}} = \hat{\bm{\mathcal{K}}}_G+\kappa\bm{I} \label{stiffness_metric}
\end{equation}
with constant $\kappa\in\mathbb{R}^+$ and split up \eqref{full_mat_ct} for $\eta_{\bm{e}}\!\coloneqq\!\tfrac{\nu_{\bm{e}}}{\|\bm{e}\|^2}$ into
\begin{align}
\bm{0} \preceq& \underbrace{\begin{bmatrix} a\bm{I}  & \tfrac{b}{2}\bm{I} \!-\! \varepsilon\alpha\hat{\bm{M}} \\ \tfrac{b}{2}\bm{I} \!-\! \varepsilon\alpha\hat{\bm{M}} & \gamma\bm{I} \!-\! (\varepsilon\!+\!\alpha)\hat{\bm{M}} \end{bmatrix}}_{=\bm{\Upsilon}(\alpha)} \! + \!\begin{bmatrix} \tilde{\bm{G}}\!+\!\varepsilon\eta_{\bm{e}}\bm{I} & \tfrac{\eta_{\bm{e}}+\varepsilon\eta_{\dot{\bm{e}}}}{2}\bm{I} \\ \tfrac{\eta_{\bm{e}}+\varepsilon\eta_{\dot{\bm{e}}}}{2}\bm{I} & \eta_{\dot{\bm{e}}}\bm{I} \end{bmatrix} \notag\\
& \!+\! \underbrace{\begin{bmatrix} \varepsilon\tilde{\bm{K}}_P & \tfrac{1}{2}(\tilde{\bm{K}}_P \!+\! \varepsilon(\tilde{\bm{D}} \!-\! \hat{\bm{C}}^\top ) ) \\ \tfrac{1}{2}(\tilde{\bm{K}}_P \!+\! \varepsilon(\tilde{\bm{D}} \!-\! \hat{\bm{C}} ) )  & \tilde{\bm{D}} \end{bmatrix}}_{=\bm{R}} , \label{decomp_mat_ct}\\ 
\tilde{\bm{G}} \!=& \varepsilon\hat{\bm{K}}_G \!-\! \alpha\hat{\bm{\mathcal{K}}}_G, \quad \tilde{\bm{K}}_P \!=\!\! \bm{K}_P \!-\! \underline{k}_P\bm{I}, \quad \tilde{\bm{D}} \!=\!\! \hat{\bm{D}} \!-\! \underline{\hat{d}}\bm{I} \!+\! \tilde{\bm{K}}_D . \notag
\end{align}
Here, we have minimized $\bm{B}^\prime$ and eliminated $-\tfrac{1}{2}\dot{\hat{\bm{\mathcal{K}}}}_G$ in $\bm{P}$ by cancelling $\hat{\bm{K}}_G$ and exploiting $\hat{G}(\bm{e}) = \tfrac{1}{2}\bm{e}^\top\hat{\bm{\mathcal{K}}}_G\bm{e}$ with
\begin{equation}
\dot{\hat{G}}(\bm{e})=\dot{\bm{e}}^\top\hat{\bm{K}}_G\bm{e} \quad\Rightarrow\quad \tfrac{1}{2}\bm{e}^\top\dot{\hat{\bm{\mathcal{K}}}}_G\bm{e} = \dot{\bm{e}}^\top(\hat{\bm{K}}_G-\hat{\bm{\mathcal{K}}}_G)\bm{e} \, . \notag
\end{equation}
Also, we have directly made use of the obvious requirement $\varepsilon>0$ for $\bm{P} \succ \bm{0}$ in \eqref{full_mat_ct} and introduced the constants
\begin{equation}
a \!=\! \varepsilon(\underline{k}_P\!-\!\tfrac{\vartheta}{2})\!-\! \alpha\kappa , \quad b \!=\! \underline{k}_P \!+\! \varepsilon(\underline{\hat{d}}\!+\!\underline{k}_D) \!-\! \kappa , \quad \gamma \!=\! \underline{\hat{d}}\!+\!\underline{k}_D\!-\! \tfrac{\varphi}{2}. \notag
\end{equation}
Note that, for the sake of simplicity, we refrain from an extended choice of $\bm{\mathcal{K}}$ enforcing $\bm{B}^\prime=\bm{0}$ since it would involve additional matrix derivatives $\dot{\hat{\bm{D}}},\dot{\bm{K}}_{P,D},(\dot{\eta}_{\bm{e}}\!+\!\varepsilon\dot{\eta}_{\dot{\bm{e}}})\bm{I}$. Therefore, we leverage Weyl's inequalities \cite[III.2.1]{bhatia_ma} next to transform~\eqref{full_mat_ct} using the symmetric decomposition \eqref{decomp_mat_ct} to
\begin{equation}
\alpha\hat{G} \leq [\underbrace{\underline{\lambda}(\bm{\Upsilon}(\alpha)) \!+\! \underline{\lambda}(\bm{R})}_{=\lambda(\alpha,t)}]\lVert\bm{x}\rVert^2 \!+\! \varepsilon(\bm{e}^\top\!\hat{\bm{g}}_{\bm{e}}  \!+\! \nu_{\bm{e}} \!+\! \omega_{\dot{\bm{e}}}) \!+\! \nu_{\dot{\bm{e}}} \!+\! \omega_{\bm{e}} \label{min_eig_val_cond_upsilon}
\end{equation}
with $\nu_{\bm{e}}$, $\nu_{\dot{\bm{e}}}$ and $\omega_{\bm{e}}$, $\omega_{\dot{\bm{e}}}$ from \eqref{implct_eq_alpha}. Now, we exploit the commutability of the submatrices of $\bm{\Upsilon}$ in \eqref{decomp_mat_ct} to apply \cite[Theorem 3]{silvester_block_det} and follow for $\bm{\upsilon}\!=\![a\!+\!\gamma\!-\!(\varepsilon\!+\!\alpha)\hat{\bm{m}}]/2$ that 
\begin{align}
\bm{\lambda}(\bm{\Upsilon}(\alpha)) \!=\,& \bm{\upsilon}(\alpha) \!\pm\! \sqrt{\bm{\upsilon}^2(\alpha) \!+\! \big(\varepsilon\alpha\hat{\bm{m}} \!-\! \tfrac{b}{2}\big)^2 \!-\! a\big[\gamma\!-\!(\varepsilon\!+\!\alpha)\hat{\bm{m}}\big]} \notag \\
=\,& \bm{\upsilon}(\alpha) \!\pm\! \sqrt{\tfrac{\left[\gamma-a(\alpha)-(\varepsilon+\alpha)\hat{\bm{m}}\right]^2}{4} \!+\! \big(\varepsilon\alpha\hat{\bm{m}}\!-\!\tfrac{b}{2}\big)^2} . \label{eig_upsilon} 
\end{align}
To obtain an optimal parameterization, we consider
\begin{equation}
\max_{\kappa,\varphi} \underline{\lambda}(\bm{\Upsilon}) = \varepsilon(\underline{k}_P\!-\!\tfrac{\vartheta}{2})\!-\! \alpha\kappa \!-\! \tfrac{(\varepsilon+\alpha)\Delta\hat{m}}{4} \left(1\!+\!\sqrt{1\!+\!\tfrac{4\varepsilon^2\alpha^2}{(\varepsilon+\alpha)^2}}\right), \notag
\end{equation}
which then, subsequently, allows a maximal choice of convergence rate $\alpha$ such that \eqref{min_eig_val_cond_upsilon} and thus \eqref{lmi_ct_ball} hold with equality. Here, we have directly reduced $\max\underline{\lambda}(\bm{\Upsilon})$ by optimally centering the two spheres generated by the spectrum of $\hat{\bm{M}}$ in the origin via the requirements $b=\varepsilon\alpha(\underline{\hat{m}}+\bar{\hat{m}})$ and $\gamma = a + (\varepsilon+\alpha)(\underline{\hat{m}}+\bar{\hat{m}})/2$, leading to the optimal parameters \eqref{opt_kappa}--\eqref{opt_phi}.
Then, given values $\varepsilon,\vartheta\in\mathbb{R}^+$, obtained, e.g., via further numerical optimization based on the radius \eqref{radius_rho_sq}, we plug \eqref{eig_upsilon} into \eqref{min_eig_val_cond_upsilon} and solve for the maximal $\alpha$ such that the resulting quadratic expression $0\leq a_0-2a_1\alpha+a_2\alpha^2$ holds with equality. Finally, we obtain the time-variant rate $\alpha(t)$ as
\begin{equation}
\alpha(t) = \tfrac{1}{a_2(t)} \left( a_1(t) - \sqrt{a_1^2(t) - a_0(t)a_2(t)}\right) \,, \label{alpha_t_quad_sol}
\end{equation}
with $a_{1-3}(t)$ given for $\hat{m}^*(t)$ s.t.~$\underline{\lambda}(\bm{\Upsilon}(\alpha))=\lambda^{-}(\bm{\Upsilon}(\hat{m}^*))$ by
\begin{align}
a_0(t) &= \xi^2(t) - \zeta^2(t) - b^2(t) \,, \notag \\
a_1(t) &= \varkappa(t)\xi(t) + [\kappa-\hat{m}^*(t)]\zeta(t) - 2\varepsilon\hat{m}^*(t)b(t) \,, \notag\\
a_2(t) &= \varkappa^2(t) - [\kappa-\hat{m}^*(t)]^2 - [2\varepsilon\hat{m}^*(t)]^2 \,, \notag
\end{align}
where $\varkappa(t)  \!=\! \kappa \!+\! \hat{m}^* \!\!\!+\!\! \tfrac{2\hat{G}(\bm{e})}{\|\bm{e}\|^2\!+\!\|\dot{\bm{e}}\|^2}$, $\zeta(t) \!=\! \gamma \!-\! \varepsilon(\underline{k}_P\!-\!\tfrac{\vartheta}{2}\!+\!\hat{m}^*)$ and 
\begin{equation}
\xi(t) \!=\! \varepsilon(\underline{k}_P\!-\!\tfrac{\vartheta}{2}\!-\!\hat{m}^*\!+\!2\tfrac{\bm{e}^\top\!\hat{\bm{g}}_{\bm{e}} \!+\! \nu_{\bm{e}} \!+\! \omega_{\dot{\bm{e}}}}{\|\bm{e}\|^2\!+\!\|\dot{\bm{e}}\|^2}\!) \!+\! 2[\!\tfrac{\nu_{\dot{\bm{e}}} + \omega_{\bm{e}}}{\|\bm{e}\|^2\!+\!\|\dot{\bm{e}}\|^2} \!+\! \underline{\lambda}(\bm{R})] \!+\! \gamma \notag .
\end{equation}
All together, exponential convergence and stability to within the ball $B(\varrho)$ from \eqref{th1_ball} can thus be concluded in the region
\begin{align}
\bm{\mathcal{E}} = \big\{ &\bm{e},\dot{\bm{e}}\in\mathbb{R}^N \,\big|\, \underline{\alpha}\hat{G}(\bm{e}) \leq [\underline{\lambda}(\bm{\Upsilon}) \!+\! \underline{\lambda}(\bm{R})](\lVert\bm{e}\rVert^2 \!+\! \lVert\dot{\bm{e}}\rVert^2) \notag\\
& \hspace{3cm}\!+\! \varepsilon(\bm{e}^\top\!\hat{\bm{g}}_{\bm{e}}  \!+\! \nu_{\bm{e}} \!+\! \omega_{\dot{\bm{e}}}) \!+\! \nu_{\dot{\bm{e}}} \!+\! \omega_{\bm{e}} ,\notag\\
& \beta(\delta,d)\sqrt{\underline{\lambda}(\bm{\Sigma}_{\bm{\tau}}(t))+s(d)}+Ld \leq \Delta\sqrt{\alpha(t)}, \, \forall t\geq t_0 \big\} \notag
\end{align}
since \eqref{alpha_t_quad_sol} guarantees that \eqref{min_eig_val_cond_upsilon} holds, and thus also \eqref{prob_exp_conv_prf} due to \eqref{lmi_ct_ball}--\eqref{decomp_mat_ct}. $\hfill\blacksquare$
\end{proof_th}

\subsection{Design Interpretations}

Intuitively, Lemma~\ref{lemma_met} provides a systematic approach to designing a Lyapunov function (or a contraction metric) for exponential stability proofs applied to Lagrangian systems. Typically, $\varepsilon$ is chosen sufficiently small to ensure negative definiteness of the derivative, see, e.g., \cite[p.~187]{murray_amitrm}, and eigenvalues are conservatively bounded, whereas we enable optimization of $\varepsilon$ on the interval \eqref{eps_intrvl_theo1} and provide exact expressions for the metric's spectrum in Corollary~\ref{coro_met}. Moreover, the results allow a systematic design of the stiffness $\bm{\mathcal{K}}$ contrary to the common approach only using the system's potential energy. In our case, we have employed the suboptimal choice \eqref{stiffness_metric} in sight of the analytical simplicity of the result, for which an optimal parametrization \eqref{opt_kappa} is then derived. Furthermore, we observe from \eqref{full_mat_ct} in the proof of Theorem~\ref{lgp_nat_pdp_theo1} that setting $\bm{\mathcal{K}} = \bm{P}^\prime + \varepsilon(\bm{E}^\prime-2\alpha\hat{\bm{M}})$ would maximize the spectrum of the considered matrix inequality on the right-hand side, and thus also the admissable convergence rate $\alpha$.

The benefit of our structure-preserving control design is emphasized theoretically in Theorem~\ref{lgp_nat_pdp_theo1}. It is particularly visible in the influences on the exponential convergence rate in \eqref{min_eig_val_cond_upsilon}. Here, it becomes clear that the preservation of the potential energy \eqref{des_pot_frcs} and the conservative projector-based compensation \eqref{c_law}--\eqref{proj_mat}, associated in \eqref{min_eig_val_cond_upsilon} with the terms $\bm{e}^\top\!\hat{\bm{g}}_{\bm{e}}$ and $\nu_{\bm{e}}\!\!=\!\!h(\bm{e}^\top\!\hat{\bm{g}}_{\bm{q}})\bm{e}^\top\!\hat{\bm{g}}_{\bm{q}}$, $\nu_{\dot{\bm{e}}}\!\!=\!\!h(\dot{\bm{e}}^\top\!\hat{\bm{d}}_{\dot{\bm{q}}})\dot{\bm{e}}^\top\!\hat{\bm{d}}_{\dot{\bm{q}}}$, respectively, have a beneficial impact on the stability of the system. Moreover, the radius of the ball \eqref{th1_ball} clearly decreases with the term $\hat{G}(\bm{e})$ in \eqref{mu_lb_th1} due to the structural preservation of the potential energy in the closed-loop system.

\begin{figure*}[]
\includegraphics[width=\textwidth]{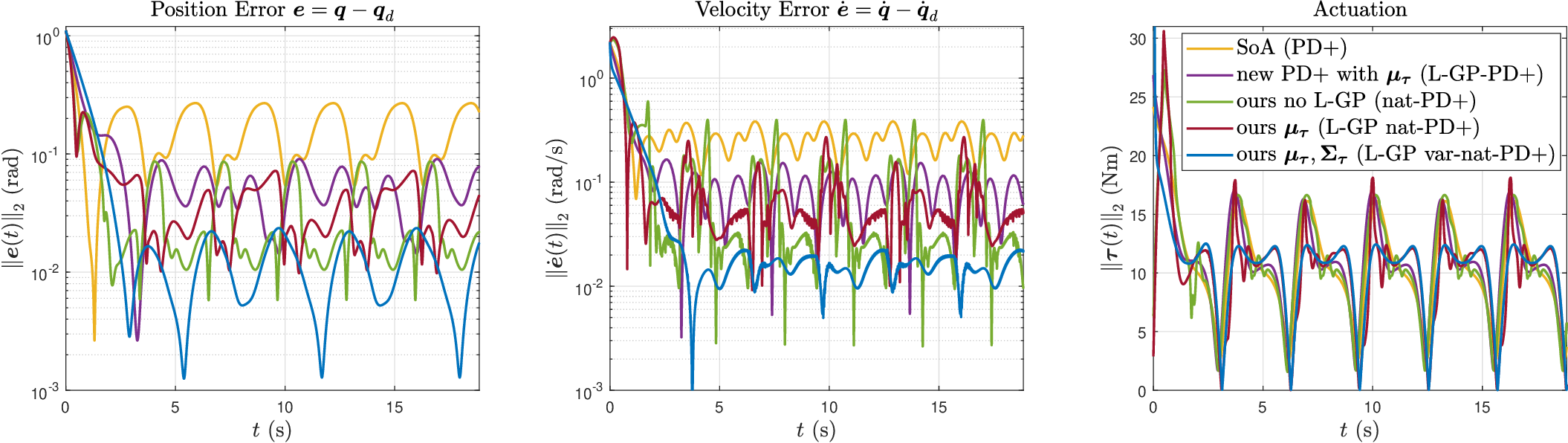}
\caption[]{Tracking performances of the standard parametric and proposed, structure-preserving or L-GP-based, PD+ controllers.}
\label{log_errors_tl}
\end{figure*}

Also, note that a similar exponential stability result for the standard PD+ controller \cite{pdp_ref} can be derived using an ablation of the argumentation in Theorem~1. Then, the desired energy quantities become $\bm{g}_d(\bm{e})\!\!=\!\!\bm{K}_P\bm{e}$, $\bm{d}_d(\dot{\bm{e}})\!\!=\!\!\bm{K}_D\dot{\bm{e}}$ with constant non-adaptive $\bm{K}_{P,D}\!\!\in\!\!\mathbb{R}^{N\times N}$, and friction as well as gravity are fully compensated. Therefore, we can conclude, based on our theoretical investigations, that the classical controller \cite{pdp_ref} is less robust due to the lower convergence rate and higher radius, stemming from the missing structural benefits since gravity and friction are fully compensated. This becomes particularly important for handling the possibly destabilizing effect of the Coriolis cross-term in \eqref{decomp_mat_ct} since $\bm{R}$ consists in the worst-case of a negative summand whose impact is scaled with $\lVert\hat{\bm{C}}\rVert\!\leq\!(\hat{c}_0 \!+\! \hat{c}_1\lVert\bm{q}\rVert)\lVert\dot{\bm{q}}\rVert$ with positive constants $\hat{c}_0,\hat{c}_1\!\in\!\mathbb{R}^+$.

Furthermore, the positive effect of the variance-based gain adaptation \eqref{sigma_gain_adapt_law} can be seen by inspecting $\underline{\lambda}(\bm{R})$ from \eqref{decomp_mat_ct}--\eqref{min_eig_val_cond_upsilon}, since it has a direct influence on the convergence rate $\alpha(t)$, c.f., e.g., \eqref{implct_eq_alpha}, via $\lambda(\alpha,t)\!\!=\!\!\underline{\lambda}(\bm{\Upsilon})\!\!+\!\!\underline{\lambda}(\bm{R})$. Reformulating $\bm{R}$ from \eqref{decomp_mat_ct} for $\tilde{\bm{K}}_P\!=\!\tilde{\bm{K}}_D$ to
\begin{equation}
\bm{R} = \underbrace{ \begin{bmatrix} \varepsilon\tilde{\bm{K}}_P & \tfrac{1+\varepsilon}{2}\tilde{\bm{K}}_P \\ \tfrac{1+\varepsilon}{2}\tilde{\bm{K}}_P & \tilde{\bm{K}}_P \end{bmatrix} }_{=\tilde{\bm{K}}} + \underbrace{ \begin{bmatrix} \bm{0} & \tfrac{\varepsilon}{2}\tilde{\bm{D}} \\ \tfrac{\varepsilon}{2}\tilde{\bm{D}} & \tilde{\bm{D}} \end{bmatrix} }_{=\tilde{\bm{\mathcal{D}}}} - \tfrac{\varepsilon}{2} \begin{bmatrix} \bm{0} & \hat{\bm{C}}^\top \\ \hat{\bm{C}} & \bm{0} \end{bmatrix} \notag ,
\end{equation}
we follow with Weyl's inequalities \cite[III.2.1]{bhatia_ma} that $\underline{\lambda}(\bm{R}) \!\geq\! \underline{\lambda}(\tilde{\bm{K}}) \!+\! \underline{\lambda}(\tilde{\bm{\mathcal{D}}}) \!-\! \tfrac{\varepsilon}{2}\|\hat{\bm{C}}\|$. Computing the spectrum 
\begin{equation}
\bm{\lambda}(\tilde{\bm{K}}) = \tfrac{\bm{\lambda}(\tilde{\bm{K}}_P)}{2} \left(1+\varepsilon\pm\sqrt{2(1+\varepsilon^2)}\right) \notag ,
\end{equation}
where we have applied \cite[Theorem 3]{silvester_block_det}, we follow that $\underline{\lambda}(\tilde{\bm{K}})\!\geq\!0$ holds beneficially for $\varepsilon \!\geq\! 1$.

In the following numerical evaluations, we will validate these observations and even show that a classical PD+ controller can become unstable very early while our control law remains robust despite increasing disturbance speeds. 

\section{Numerical Illustrations}\label{num_sec}

In this section, we validate the efficacy of our proposed methods in numerical simulations.\footnote{\textbf{Code:} For reproduction of our experiments, Matlab code along with an L-GP toolbox are available under: \href{https://github.com/gevangelisti/lgp_prjctr_ctrl}{https://github.com/gevangelisti/lgp\_prjctr\_ctrl}} At first, we start with a simple benchmark example for the sake of accessible interpretation and comparability and then move on to a higher system complexity illustrating scalable applicability along with practical feasibility.

\subsection{Two-link Manipulator}\label{two_link_subsection}

\subsubsection{Setup}

We benchmark our proposed control methods using the two-link robotic manipulator from \cite[p.~164]{murray_amitrm}. Gravity $g\!\!=\!\!10\!$ m/s$^2$ acts along the positive $x$-axis as in~\cite{mein_cdc_paper} such that $\bm{q}\!\!=\!\!\bm{0}$ is an equilibrium of the system. The links are parametrized to have unit masses $m_n\!\!=\!\!1$~kg and lengths $l_n\!\!=\!\!1$~m for $n\!\!\in\!\!\{1,2\}$. We additionally include dissipation via unit linear and quadratic damper elements at each joint, leading to $\bm{D}(\dot{\bm{q}})\!\!=\!\!d_1\bm{I}\!\!+\!\!d_2\mathrm{diag}(|\dot{\bm{q}}|)$ with $d_1\!\!=\!\!d_2\!\!=\!\!1$. Parameter estimates are available but erroneous: $\hat{m}_n\!\!=\!\!(1\!\!+\!\!\chi_n)m_n$, $\hat{l}_n\!\!=\!\!(1\!\!+\!\!\chi_n)l_n$ and $\hat{d}_n\!\!=\!\!(1\!\!-\!\!\chi_n)d_n$ with an alternating relative bias of $50\%$ such that $\chi_n\!\!=\!\!(-1)^{n-1}/2$.

\begin{figure}
\centering
\includegraphics[width=\linewidth]{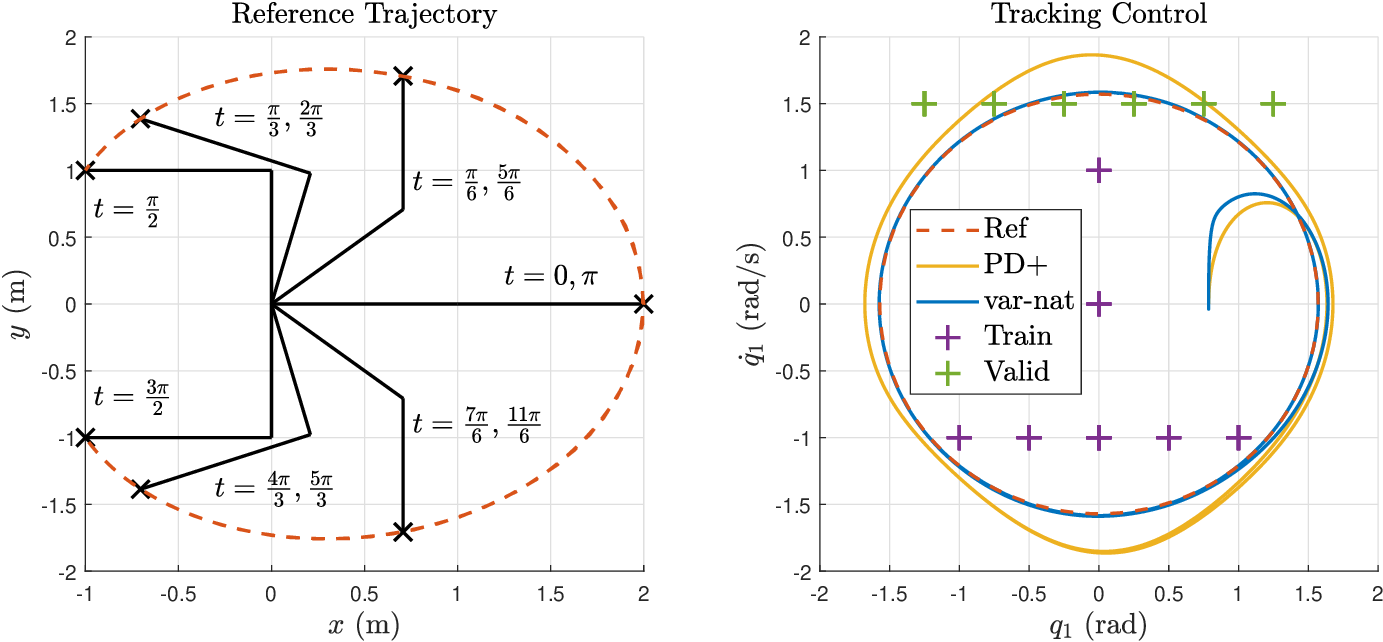}
\caption[PD+ control]{Reference trajectory in the cartesian workspace of the two-link (left) and tracking performance comparison of the standard parametric with the proposed variance-adaptive and structure-preserving, L-GP-based, PD+ controller (right).} 
\label{two_link_pdp_comp}
\end{figure}%

For the L-GP's hyperparameters, we use $D\!\!=\!\!5^2\!+\!3^2$ training and $V\!\!=\!\!6^2$ validation data pairs stemming from equally distanced grids on the domains $\bm{q}\!\!\in\!\![-a,a]^2$ for $a\!\!=\!\!1,\!1.25$ with fixed velocity $\dot{q}_n\!\!=\!\!(-1)^{n-1}\!\!,\!(2\!-\! n)1.5$ and acceleration $\ddot{\bm{q}}\!\!=\!\!\bm{4},\!\bm{0}$, respectively. Due to the additional dissipative subcomponent, the training data set also includes a small $3\!\!\times\!\!3$ grid in the velocity domain $\dot{\bm{q}}\!\!\in\!\![-1,1]^2$ for $\bm{q}\!=\!\ddot{\bm{q}}\!=\!\bm{0}$. Torque and acceleration measurements are corrupted by i.i.d.~noise with standard deviations of $0.1$ Nm and $\pi/180$ $\text{rad}/\text{s}^{2}$, respectively. We reduce the kinetic mass inertia hypermetric to the constant Euclidian form $\bm{\Lambda}^{-1}\!\!=\!\!\sigma_{\bm{d}_{T}}^{2}\bm{I}$. Similarly, we assume a  gravitational distance covariance $\bm{\Sigma}_{\bm{d}_G}\!\!=\!\!\mathrm{diag}(\bm{\sigma}_{\bm{d}_G}^{2})$. The hyperparameters are then optimized via the least-squares approximation of all $D\!+\!V\!=\!70$ measurements.

\subsubsection{Control}

As a control task, we consider the goal of tracking a sinusoidal reference trajectory $\bm{q}_d(t)\!\!=\!\!\pi/2\sin(t)\bm{1}$ starting from the initial condition $\bm{q}(0)\!\!=\!\!\pi/4\bm{1}$ and $\dot{\bm{q}}(0)\!\!=\!\!\bm{0}$, cf. Fig.~\ref{two_link_pdp_comp}. In addition to our proposed structure-preserving and L-GP-based controllers, we also compare with the standard parametric PD+ ablation from \cite[p.~194]{murray_amitrm} along with its L-GP extended counterpart. To prevent numerical stability issues, we implement the required projection matrices as
\begin{equation}
\bm{P}_{\bm{e}} = \tfrac{\bm{e}\bm{e}^\top}{\epsilon+\lVert\bm{e}\rVert^2} \label{project_epsilon}
\end{equation}
with an $\epsilon\!\!=\!\!10^{-3}$. Note that this is a common numerical procedure often employed in other inversion implementations such as, e.g., mass matrix predictions with DeLaNs \cite{lutter2021combining} or Cholesky decompositions of covariance matrices \cite{rasmussen_gp_mit_book}. Moreover, all controllers are parametrized by the same proportional and derivative corrections $\bm{K}_P\!\!=\!\!\bm{K}_D\!\!=\!\!10\bm{I}$. The variance-adaptive natural PD+ controller additonally uses the gains \eqref{sigma_gain_adapt_law}, which are identically set to $\bm{K}_i\!\!=\!\!k_i\bm{I}$ with $k_1\!\!=\!\!10^2$, $k_2\!\!=\!\!0.02$ and $k_3\!\!=\!\!7.11$ such that $\underline{k}_P\!\!=\!\!\underline{k}_D\!\!=\!\!10\!\!+\!\!1$. Note that instead of assuming their measurability, the accelerations required to compute the L-GP covariance are estimated based on the model.

\begin{table}[]
\begin{tabular}{l|ccccc}
& PD+    & L-GP-PD+ & nat-PD+ & \begin{tabular}[c]{@{}c@{}}L-GP\\ nat-PD+\end{tabular} & \begin{tabular}[c]{@{}c@{}}L-GP var-\\ nat-PD+\end{tabular} \\ \hline
$\lVert\bm{\tau}\rVert_{\mathcal{L}_2}$ & 35.62  & 35.71	& 36.23 & 33.60 & \textbf{33.28} \\
$\max(\lVert\bm{\tau}\rVert)$\hspace{-0.1cm} & 16.12 & 16.95 & 16.64 & 18.12 & \textbf{12.43} \\
$\mathrm{E}[\lVert\bm{\tau}\rVert]$ & 10.599 & 10.707 & 10.769 & \textbf{10.029} & 10.086 \\ 
$\lVert[\bm{e}^\top\! \dot{\bm{e}}^\top\!]\rVert_{\mathcal{L}_2}$\hspace{-0.2cm} & 0.989 & 0.352 & 0.390 & 0.258 & \textbf{0.066} \\
$\max(\lVert\bm{e}\rVert)$  & 0.269	& 0.091 & 0.086 & 0.072 & \textbf{0.024} \\
$\max(\lVert\dot{\bm{e}}\rVert)$  & 0.384 & 0.154 & 0.397 & 0.271 & \textbf{0.023} \\
$\mathrm{E}[\lVert\bm{e}\rVert]$ & 0.156 & 0.053 & 0.032 & 0.031 & \textbf{0.012} \\
$\mathrm{E}[\lVert\dot{\bm{e}}\rVert]$ & 0.251 & 0.089 & 0.080 & 0.061 & \textbf{0.015}                                                      
\end{tabular}
\caption{Numerical evaluation of the steady-state controller performances for $t\geq10$s based on the two-link. Lower values indicate better performances w.r.t.~the considered metrics.}
\label{table_num_vals_two_link_pdp_ss}
\end{table}

\begin{figure}
\centering
\includegraphics[width=\linewidth]{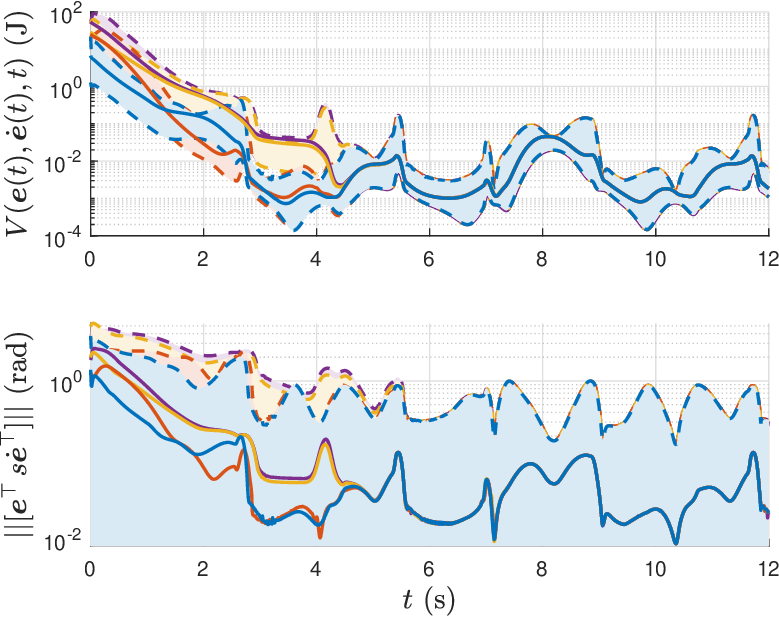}
\caption{Two-link: Lyapunov function and bounds for trajectories of the L-GP nat-PD+ controller with initial conditions $\bm{q}_0\!\sim\!\mathcal{N}(\bm{0},\sigma_0^2\bm{I})$ and $\dot{\bm{q}}_0\!\sim\!\mathcal{N}(\pi/2\bm{1},\sigma_0^2\bm{I})$, where $\sigma_0\!=\!\pi/3$. Solid lines indicate trajectories, shaded areas respective functional or error norm regions from Theorem~\ref{lgp_nat_pdp_theo1} and Lemma~\ref{lemma_met} with the bounds \eqref{prob_theo1} and \eqref{lmi_bounds_metric} given by the dashed lines.} 
\label{two_link_nat_pdp_V_and_norm_ede}
\end{figure}

\begin{figure}
\centering
\includegraphics[width=\linewidth]{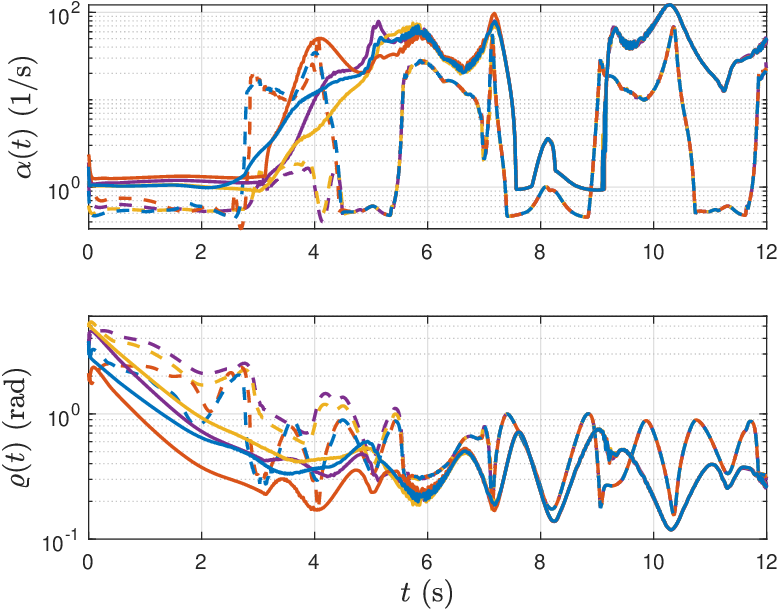}
\caption{Exponential convergence behaviors of the proposed natural structure-preserving controllers evaluated on the two-link for randomly drawn initial conditions as in Fig.~\ref{two_link_nat_pdp_V_and_norm_ede}. Solid lines indicate the variance-adaptive (L-GP var-nat-PD+) controller, dashed the static gain (L-GP nat-PD+) variant.} 
\label{two_link_exp_cov_params}
\end{figure}

\begin{figure*}[]
\centering
\includegraphics[width=\textwidth]{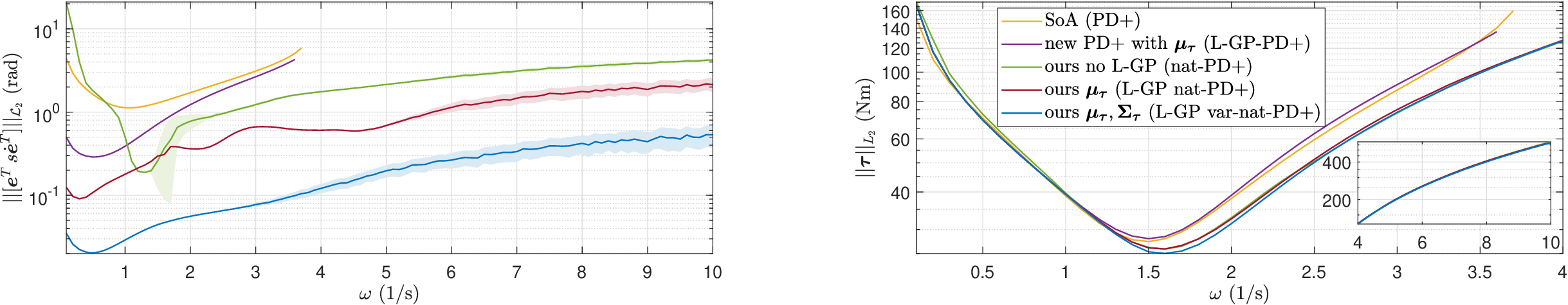}
\caption[]{Monte Carlo evaluation of the standard parametric and the proposed, natural structure-preserving or L-GP-based, PD+ controllers. Solid lines and shaded areas depict respective mean and standard deviation of $100$ realizations per frequency sample. Initial conditions are randomly drawn from the uniform distribution over $[\bm{q}_0^\top \dot{\bm{q}}_0^\top] \!\in\! [-\pi/4, \pi/4]^4$. Both standard PD+ controllers with full gravity cancellation become unstable at 3.7 and 3.8 rad s$^{-1}$.}
\label{tl_monte_carlo}
\end{figure*}

\subsubsection{Results}\label{two_link_subsub_results}

The simulation results are illustrated in Fig.~\ref{log_errors_tl} and the steady-state performances evaluated numerically in Tab.~\ref{table_num_vals_two_link_pdp_ss}. Despite considerable errors in the parameters, the proposed natural PD+ controller already shows significantly improved error metrics, demonstrating its natural robustness due to the preservation of the system's physical structure. Furthermore, employing the mean estimate of the L-GP greatly improves the tracking accuracy of both standard and natural PD+ controllers. Intuition for the functioning of the structure-preserving controllers can be gained by inspecting Fig.~\ref{log_errors_tl} for small transient times close to zero. Here, both variants without uncertainty-adaptation start with considerably lower actuation efforts due to the exploited system structures. The proposed uncertainty-based adaptation based on the L-GP's covariance matrix estimate, variably shaping and injecting potential energy and damping, respectively, demonstrates the highest tracking accuracy w.r.t.~all error metrics while having to actuate with the lowest effort based on the torque input's $\mathcal{L}_2$-norm. Its slightly increased mean over time hints at the trade-off between actuation and accuracy performed by the uncertainty-dependent feedback gains.

\begin{figure}[]
\centering
\includegraphics[width=\linewidth]{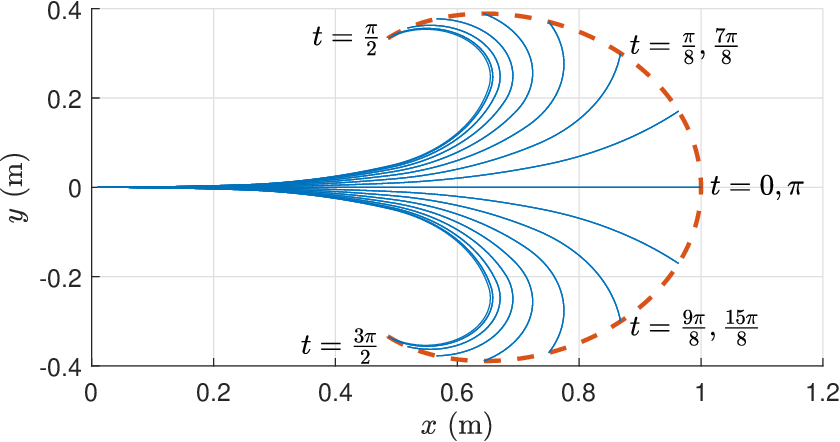}
\caption{Reference trajectory of the soft robot moving in its planar workspace.} 
\label{soro_ref_ws}
\end{figure}

For the validation of Theorem~\ref{lgp_nat_pdp_theo1}, we first evaluate the proposed Lyapunov function \eqref{V_ct_relatship} and its bounds \eqref{lmi_bounds_metric} in Fig.~\ref{two_link_nat_pdp_V_and_norm_ede} based on the metric~\eqref{coord_metric} from Lemma~\ref{lemma_met} with the design stiffness \eqref{stiffness_metric} for different closed-loop trajectories of the L-GP-based natural PD+ controller. Initial conditions are randomly drawn from $\bm{q}(0)\!\sim\!\mathcal{N}(\bm{0},\sigma_0^2\bm{I})$ and $\dot{\bm{q}}(0)\!\sim\!\mathcal{N}(\pi/2\bm{1},\sigma_0^2\bm{I})$ with a standard deviation of $\sigma_0\!=\!\pi/3$. Then, we compute the norm of the simulated trajectory errors and validate that the exponential convergence bound \eqref{prob_theo1} holds for all realizations and times $t\!\geq\!0$. Here, we have used the parameter values $\varepsilon\!=\!1.1012$, $\vartheta\!=\!1.4211$ and minimal convergence rate $\underline{\alpha}\!=\!0.1056$, which were obtained by solving
\begin{equation}
\min_{\varepsilon,\vartheta,\underline{\alpha}} \underline{\varrho}(\varepsilon,\vartheta) \!+\! \tfrac{1}{\underline{\alpha}} \quad \mathrm{s.t.} \quad \kappa,\varphi \!>\! 0, \sqrt{\tfrac{\kappa}{\bar{\hat{m}}}} \!>\! \varepsilon \!>\! 0 ,\, \underline{\lambda}(\bm{\Upsilon}(\underline{\alpha})) \!\geq\! \underline{\upsilon} \notag
\end{equation}
numerically for $\underline{\upsilon}\!=\!6$ with the virtual stiffness $\kappa\!=\!21.12$ and scale $\varphi\!=\!0.3658$ following from \eqref{opt_kappa}--\eqref{opt_phi} and the worst-case radius $\underline{\varrho}$ from \eqref{th1_ball} for $\Delta\!=\!0.5269$. The time evolution of convergence rate and ball radius is illustrated in Fig~\ref{two_link_exp_cov_params} for the same L-GP nat-PD+ trajectories along with the variance-adaptive extension. The latter clearly shows an improved convergence behavior for the same initial conditions. 

Finally, in order to confirm the overall performance increase of the proposed structural preserving and variance-adaptive controllers, we perform a Monte Carlo simulation over the initial conditions and frequency $\omega$ of the sinusoidal reference $\bm{q}_d(t)\!\!=\!\!\pi/2\sin(\omega t)\bm{1}$. Therefore, we simulate 100 realizations per frequency sample, where the initial conditions of the two-link are drawn from a uniform distribution over $\bm{q},\dot{\bm{q}}\!\in\![-a,a]^4$ for $a\!=\!\pi/4$. The results in Fig.~\ref{tl_monte_carlo} show the $\mathcal{L}_2$-norms of the error state vector and actuation over the frequency $\omega$, evaluated in the steady-state for $t\!\geq\!4\pi/\omega$. Not only can we validate subsequent increases in the tracking accuracy, but we can also observe the superior robustness of the proposed controllers. Despite rising disturbance speeds, the natural PD+ controllers structurally ensure stability and, thus, a reliable performance, whereas the standard PD+ versions become unstable regardless of the used parametric or L-GP-based model. 

\begin{figure*}[]
\centering
\includegraphics[width=\textwidth]{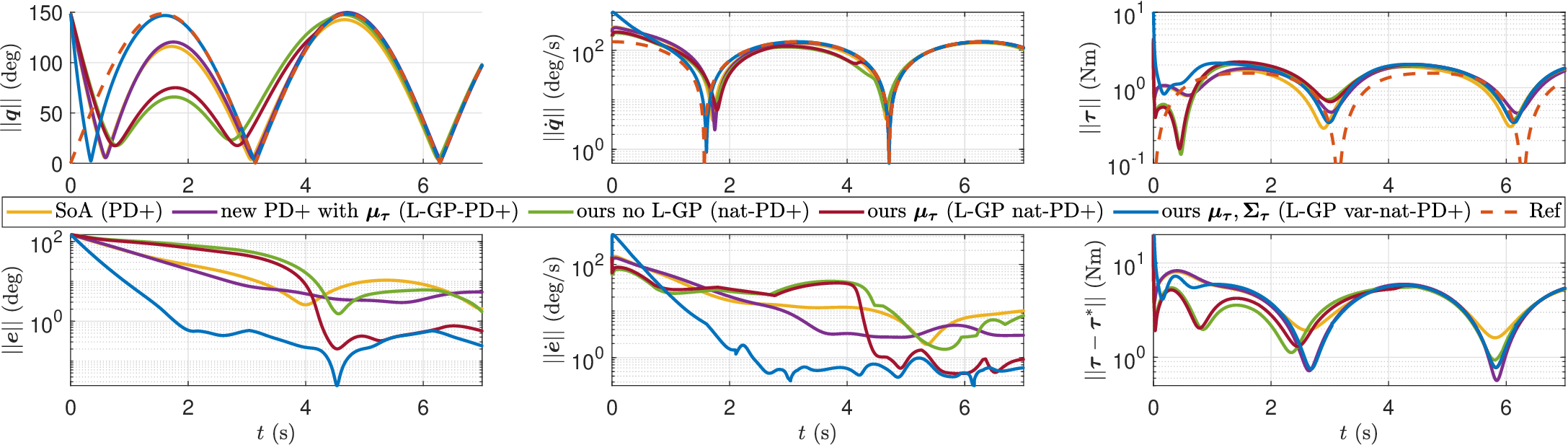}
\caption[]{Tracking performances of the standard parametric and proposed, natural structure-preserving or L-GP-based, PD+ controllers in the $N$-dimensional curvature space of the soft robot, $N\!\!=\!\!4$, for the initial condition $\bm{q}(0)\!\!=\!\!-\bar{\bm{a}}$, $\dot{\bm{q}}(0)\!\!=\!\!\bm{0}$. The reference torque $\bm{\tau}^*(t)$ indicates the nominal (feedforward) actuation based on the FEM for zero tracking errors.}
\label{traj_and_errors_soro}
\end{figure*}

\subsection{Planar Soft Robot}

\subsubsection{Setup}

Next, we consider the planar soft robot from \cite{cosimo_ijrr_plan_soro,cosimo_soro_conf}, simulated by a FEM model. For this, we employ a discretization of a continuous rod with unit mass and length as a series of infinitesimal links \cite{penning_inf__discrete_rod}, where we consider a total of $N_{\text{FEM}}\!\!=\!\!100$ with lumped rotational principle inertias of $I_n\!\!=\!\!1/(12N_{\text{FEM}}^3)$, $n\!\!\in\!\!\{1,\dots,100\}$, each subsequently connected by linear torsional-spring-damper elements. These stiffnesses and dampings are set to $k_n\!\!=\!\!10$ Nm rad$^{-1}$ and $d_n\!\!=\!\!5$ Nm rad s$^{-1}$, respectively. The FEM simulation is implemented in Matlab based on the articulated body algorithm from \cite{featherstone_rbda_book}. As in Sec.~\ref{two_link_subsection} and \cite{cosimo_ijrr_plan_soro}, we rotate the base frame such that the soft arm is aligned with gravity in its equilibrium $\bm{q}^{\text{FEM}}\!=\!\bm{0}$ with $\bm{q}^{\text{FEM}}\!\in\!\mathbb{R}^{N_{\text{FEM}}}$.

For training, we consider the system's step response to a constant torque with amplitude $a\!=\!1$ Nm continuously acting on each FEM element, starting from equilibrium. The L-GP uses $D\!=\!24$ equidistant samples, corrupted by measurement noise with standard deviation $0.01a$, of the simulated trajectory for $0\!\!\leq\!\! t\!\!\leq\!\!4\!\!$~s, while discretizing into $N\!\!=\!\!4$ constant curvature (CC) segments~\cite{cosimo_ijrr_plan_soro} equivalent to a constrained rigid body with $4N\!\!=\!\!16\!\!$ DOFs. For the parameters, estimates $\hat{m}_n\!\!=\!\!(1\!+\!\chi_n)/N$, $\hat{k}_n\!\!=\!\!k_nN(1\!\!+\!\!\chi_n)/N_{\text{FEM}}$ and $\hat{d}_n\!\!=\!\!d_nN/N_{\text{FEM}}$ are used for the masses, stiffnesses and dampings, respectively, with relative errors $\chi_n\!\!=\!\!(-1)^{n-1}\!/4$ for $n\!\!\in\!\!\{1,2,3,4\}$. We reduce to Euclidian inertial and diagonal gravitational hypermetrics analogously to Sec.~\ref{two_link_subsection}. Additionally, we exploit symmetries of the robot's configuration by asserting symmetric \cite{duvenaud_2014} squared-exponential (SE) kernels for kinetic and potential energies, leading for the latter to a point-symmetric restoring potential force $\hat{g}(-\bm{q})\!\!=\!\!-\hat{g}(\bm{q})$. Then, the hyperparameters are optimized based on the least-squares approximation of the FEM positions with the L-GP-based dynamical resimulation, where we consider a sampling frequency of $250$~Hz leading to $V\!\!=\!\!1001$ validating position samples of the training trajectory.

\subsubsection{Control}

The standard and proposed, natural structure-preserving or L-GP-based, controllers with identical proportional and derivative gains $\bm{K}_P\!\!=\!\!\bm{K}_D\!\!=\!\!\bm{I}$ are tasked with tracking a sinusoidal reference $\bm{q}_d(t)\!\!=\!\!\bar{\bm{a}}\sin t$ with amplitude vector $\bar{\bm{a}}\!\!=\!\!\pi/180[1,10,45,90]$ in the considered $8$-dimensional CC state space, cf. Fig.~\ref{soro_ref_ws}. The system is initially assumed to be at rest, $\dot{\bm{q}}(0)\!\!=\!\!\bm{0}$, but deflected by the maximum amplitude $\bm{q}(0)\!\!=\!\!-\bar{\bm{a}}$. Projectors of the structure-preserving controllers are implemented again with the same $\epsilon\!\!=\!\!10^{-3}$ as in \eqref{project_epsilon}. The covariance gains \eqref{sigma_gain_adapt_law} of the adaptive natural PD+ controller are identically set to diagonal $\bm{K}_i\!\!=\!\!k_i\bm{I}$ with $k_1\!\!=\!\!10$, $k_2\!\!=\!\!10^{-3}$ and $k_3\!\!=\!\!10.05$ such that $\underline{k}_P\!\!=\!\!\underline{k}_D\!\!=\!\!1\!\!+\!\!0.1$. Also, instead of assuming their direct measurability, model-based estimates of the accelerations are used to compute the L-GP covariance.

\begin{table}[]
\begin{tabular}{l|ccccc}
& PD+    & L-GP-PD+ & nat-PD+ & \begin{tabular}[c]{@{}c@{}}L-GP\\ nat-PD+\end{tabular} & \begin{tabular}[c]{@{}c@{}}L-GP var-\\ nat-PD+\end{tabular} \\ \hline
$\lVert\bm{\tau}\rVert_{\mathcal{L}_2}$ & \textbf{5.00}  & 5.24 & 5.32 & 5.45 & 5.46 \\
$\max(\lVert\bm{\tau}\rVert)$\hspace{-0.1cm} & \textbf{1.89} & 1.96 & 1.99 & 2.04 & 2.05 \\
$\mathrm{E}[\lVert\bm{\tau}\rVert]$ & \textbf{1.313} & 1.395 & 1.408 & 1.436 & 1.439 \\ 
$\lVert[\bm{e}^\top\! \dot{\bm{e}}^\top\!]\rVert_{\mathcal{L}_2}$\hspace{-0.2cm} & 0.670 & 0.379 & 0.650 & 0.059 & \textbf{0.042} \\
$\max(\lVert\bm{e}\rVert)$  & 0.191 & 0.116 & 0.185 & 0.013 & \textbf{0.010} \\
$\max(\lVert\dot{\bm{e}}\rVert)$  & 0.181 & 0.089 & 0.210 & 0.022 & \textbf{0.017} \\
$\mathrm{E}[\lVert\bm{e}\rVert]$ & 0.123 & 0.088 & 0.118 & 0.008 & \textbf{0.005} \\
$\mathrm{E}[\lVert\dot{\bm{e}}\rVert]$ & 0.125 & 0.057 & 0.113 & 0.0135 & \textbf{0.0099}                                                      
\end{tabular}
\caption{Numerical evaluation of the steady-state control performances for $t\geq2\pi$s based on the soft robot. Lower values indicate better performances w.r.t.~the considered metrics.}
\label{table_num_vals_soro_ss}
\end{table}

\subsubsection{Results}

The controllers' tracking performances are visualized in Fig.~\ref{traj_and_errors_soro}, and their steady-state accuracies are evaluated numerically in Tab.~\ref{table_num_vals_soro_ss}. Aside from the significant performance increases achieved by the application of the L-GP model, the natural structure-preserving (L-GP nat-PD+) controller also leads to another drastic improvement by a factor of $6.47$ compared to the standard (L-GP-)PD+ variant w.r.t.~the $\mathcal{L}_2$-norm of the trajectory error. The proposed variance-adaptive (L-GP var-nat-PD+) controller demonstrates the highest tracking accuracy w.r.t.~all error metrics with slightly increased actuation effort, however.

\begin{figure}
\includegraphics[width=\linewidth]{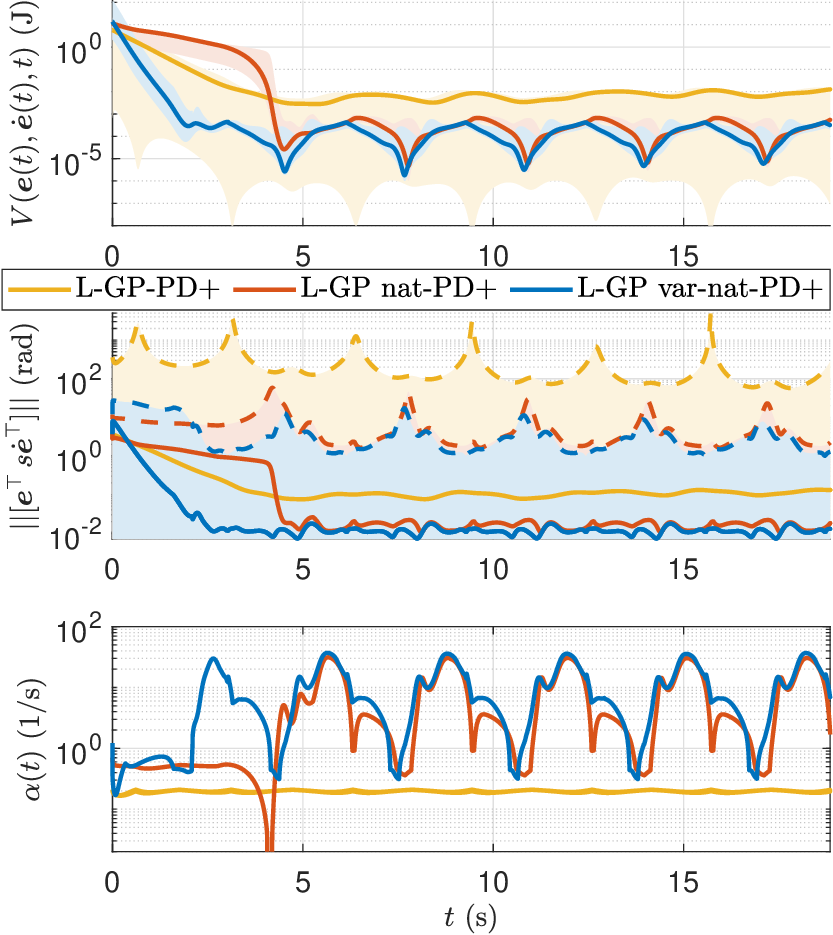}
\caption{Lyapunov functions, bounds, and exponential convergence rates of all L-GP-based controllers for the closed-loop trajectories from Fig.~\ref{traj_and_errors_soro}. Solid lines indicate trajectories, shaded areas functional \eqref{lmi_bounds_metric} or error norm \eqref{prob_theo1} bounds from Lemma~\ref{lemma_met} or Theorem~\ref{lgp_nat_pdp_theo1}, respectively.} 
\label{soro_nat_pdp_V_and_norm_ede}
\end{figure}

Next, we apply Theorem~\ref{lgp_nat_pdp_theo1} to compare the convergence and robustness of the controllers tracking the soft robot's reference curvature. The according Lyapunov functions \eqref{V_ct_relatship} and their bounds \eqref{lmi_bounds_metric} are evaluated in Fig.~\ref{soro_nat_pdp_V_and_norm_ede}, where the gravitational energy $\hat{G}(\bm{e})=\tfrac{1}{2}\bm{e}^\top\hat{\bm{\mathcal{K}}}_G(\bm{e})\bm{e}$ is eliminated from the metric~\eqref{coord_metric} with design stiffness \eqref{stiffness_metric} for the standard PD+ controller due to its cancellation. The norms of the closed-loop errors and the exponential convergence bounds \eqref{prob_theo1} are shown in the middle subplot of Fig.~\ref{soro_nat_pdp_V_and_norm_ede}, confirming their validity for each of the controllers $\forall t\!\geq\!0$ and also demonstrating their applicability to higher system dimensionalities. As in Sec.~\ref{two_link_subsub_results}, we optimally parametrize all guarantuees by leveraging \eqref{opt_kappa}--\eqref{opt_phi} and numerical minimization of the worst-case radius $\underline{\varrho}$ from \eqref{th1_ball}, for an L-GP model error bound of $\Delta\!=\!0.9556$. The resulting time evolutions of convergence rate $\alpha(t)$ and ball radius $\varrho(t)$ of all L-GP-based controllers are visualized in the bottom and middle subplots of Fig~\ref{soro_nat_pdp_V_and_norm_ede} by the solid and dashed lines, respectively. Compared to the standard (L-GP-)PD+, the structure-preserving (L-GP nat-PD+) controller without variance adaptation already leads to significantly increased robustness and convergence with an improved exponential rate and radius by up to two and three orders of magnitude, respectively. Finally, the variance-adaptive (L-GP var-nat-PD+) version shows the best behavior, with a big improvement compared to the static-gain (L-GP nat-PD+) counterpart, particularly visible in the transient phase $0\!\leq\!t\!\leq\! 5$~s.

\section{Conclusion}\label{sec_conclusion}

We have proposed a structure-preserving tracking controller using projections and a learned model given by the posterior of a Lagrangian-Gaussian Process (L-GP). The uncertainty quantification in the form of the covariance matrix is encoded into a confidence-dependent feedforward-feedback balancing scheme. High accuracy and performance are guaranteed by precise theoretical results ensuring exponential convergence and stability probabilistically. Numerical simulations of robotic systems, a planar two-link and soft manipulator, confirm the theoretical and practical efficacy of the proposed methods.

\appendices

\section*{Acknowledgment}

This work was supported by the Consolidator Grant ”Safe data-driven control for human-centric systems” (CO-MAN) of the European Research Council (ERC) by the European Union (EU) under grant agreement ID 864686.

\section{Supplementary Proofs}

\subsection{Proof of Lemma~\ref{lemma_met}}\label{l3_p_appendix}

Using Schur complements~\cite{boyd_lmis}, the equivalence
\begin{equation}
\bm{\mathcal{M}}\succ\bm{0} \quad \Leftrightarrow \quad \bm{\mathcal{K}} - \varepsilon^2\hat{\bm{M}}\hat{\bm{M}}^{-1}\hat{\bm{M}} \succ \bm{0} \notag
\end{equation}
can be used to follow that $\varepsilon^2\hat{\bm{M}}\prec\bm{\mathcal{K}}$ must hold for positive definiteness of $\bm{\mathcal{M}}$, since $\hat{\bm{M}} \succ \bm{0}$. Then, applying Weyl's inequalities \cite[III.2.1]{bhatia_ma}, we arrive at $\varepsilon^2\bar{\hat{m}}<\underline{\kappa}$. Taking the square root and reformulating leads to the first condition. Next, for the proof of \eqref{lmi_bounds_metric}, we start off with the nonstrict LMIs
\begin{equation}
\bm{0}\preceq\bm{\mathcal{M}} - \underline{\mu}\bm{I}, \quad \bm{0}\preceq \bar{\mu}\bm{I} - \bm{\mathcal{M}} \, . \notag
\end{equation}
Given $\hat{\bm{M}}\neq\bm{\mathcal{K}}$, we can always find positive values satisfying $\underline{\mu}<\max(\underline{\kappa},\underline{\hat{m}})$, $\bar{\mu}>\min(\bar{\kappa},\bar{\hat{m}})$ such that either $\bm{\mathcal{K}}-\underline{\mu}\bm{I},\bar{\mu}\bm{I}-\bm{\mathcal{K}}\succ\bm{0}$ or $\hat{\bm{M}}-\underline{\mu}\bm{I},\bar{\mu}\bm{I}-\hat{\bm{M}}\succ\bm{0}$ hold, respectively. Finally, let us consider the special case of $\hat{\bm{M}}=\bm{\mathcal{K}}$. Then, we can directly compute the eigenvalues of \eqref{coord_metric} analytically using its characteristic polynomial:
\begin{align}
\hat{\bm{M}}=\bm{\mathcal{K}} \,\,\Rightarrow\,\, &\det(\bm{\mathcal{M}}-\lambda\bm{I}) \notag\\
&= \det[(\hat{\bm{M}}-\lambda\bm{I})^2-(\varepsilon\hat{\bm{M}})^2] \notag \\
&= \det[(1\!-\!\varepsilon)\hat{\bm{M}}-\lambda\bm{I}]\det[(1\!+\!\varepsilon)\hat{\bm{M}}-\lambda\bm{I}] \notag \, ,
\end{align}
where we have applied the formula for determinants of 2$\times$2 block matrices \cite[Theorem 3]{silvester_block_det} in the second line, valid since $\varepsilon\hat{\bm{M}}$ and $\hat{\bm{M}}-\lambda\bm{I}$ commute, along with the multiplicativity property of the determinant in the last line. In this case, the spectrum of $\bm{\lambda}(\bm{\mathcal{M}})$ is therefore the result of the eigenvalues $\bm{\lambda}(\hat{\bm{M}})$ being scaled by $1\pm\varepsilon$, such that $\bm{\lambda}(\bm{\mathcal{M}})=(1\pm\varepsilon)\bm{\lambda}(\hat{\bm{M}})$ holds. Clearly, this coincides with $\underline{\bar{\mu}}=\underline{\bar{\lambda}}(\bm{\mathcal{M}})$ in \eqref{mu_bar_unli} for $\underline{\bar{\kappa}}=\underline{\bar{\hat{m}}}$, and positive definiteness necessarily requires $|\varepsilon|<1$ which is encompassed by the second (sufficient) condition of the lemma on $|\varepsilon|$ since $\underline{\hat{m}} / \bar{\hat{m}} \leq 1$.

\subsection{Proof of Corollary~\ref{coro_met}}\label{c1_p_appendix}

Using the commutability of $\varepsilon\hat{\bm{M}}$ and $\hat{\bm{M}}-\lambda\bm{I}$ combined with \cite[Theorem 3]{silvester_block_det}, we can compute the characteristic polynomial of \eqref{coord_metric} according to
\begin{align}
\bm{\mathcal{K}}=\kappa\bm{I}\,\,\Rightarrow\,\, &\det(\bm{\mathcal{M}}-\lambda\bm{I}) \notag\\
&= \det[(\bm{\mathcal{K}}-\lambda\bm{I})(\hat{\bm{M}}-\lambda\bm{I})-\varepsilon^2\hat{\bm{M}}^2] \notag \\
&= \det[\underbrace{(\kappa-\lambda)\hat{\bm{M}}-\varepsilon^2\hat{\bm{M}}^2}_{\coloneqq\hat{\bm{A}}}-\lambda(\kappa-\lambda)\bm{I}] \notag.
\end{align}
Thus, $\det(\bm{\mathcal{M}}-\lambda\bm{I})=0$ holds iff $\lambda(\kappa-\lambda)$ is an eigenvalue of $\hat{\bm{A}}$, leading for $\hat{\mu}\coloneqq\lambda(\hat{\bm{M}})$ to the algebraic condition
\begin{equation}
(\kappa-\lambda)\hat{\mu} - \varepsilon^2\hat{\mu}^2 = \lambda(\kappa-\lambda) \, , \label{quad_eig_val} 
\end{equation}
where we have used $\bm{A}^2\bm{x}=\lambda^2(\bm{A})\bm{x}$ for $\bm{A}\in\mathbb{R}^{N\times N}$. Finally, solving the quadratic equation \eqref{quad_eig_val} for the desired eigenvalues $\lambda\equiv\lambda(\bm{\mathcal{M}})$ and vectorizing, we arrive at \eqref{eig_vals_metric_eucl_K}.

\bibliographystyle{IEEEtran}

\bibliography{root}

\vspace{-3cm}
\begin{IEEEbiography}[{\includegraphics[width=1in,height=1.25in,clip,keepaspectratio]{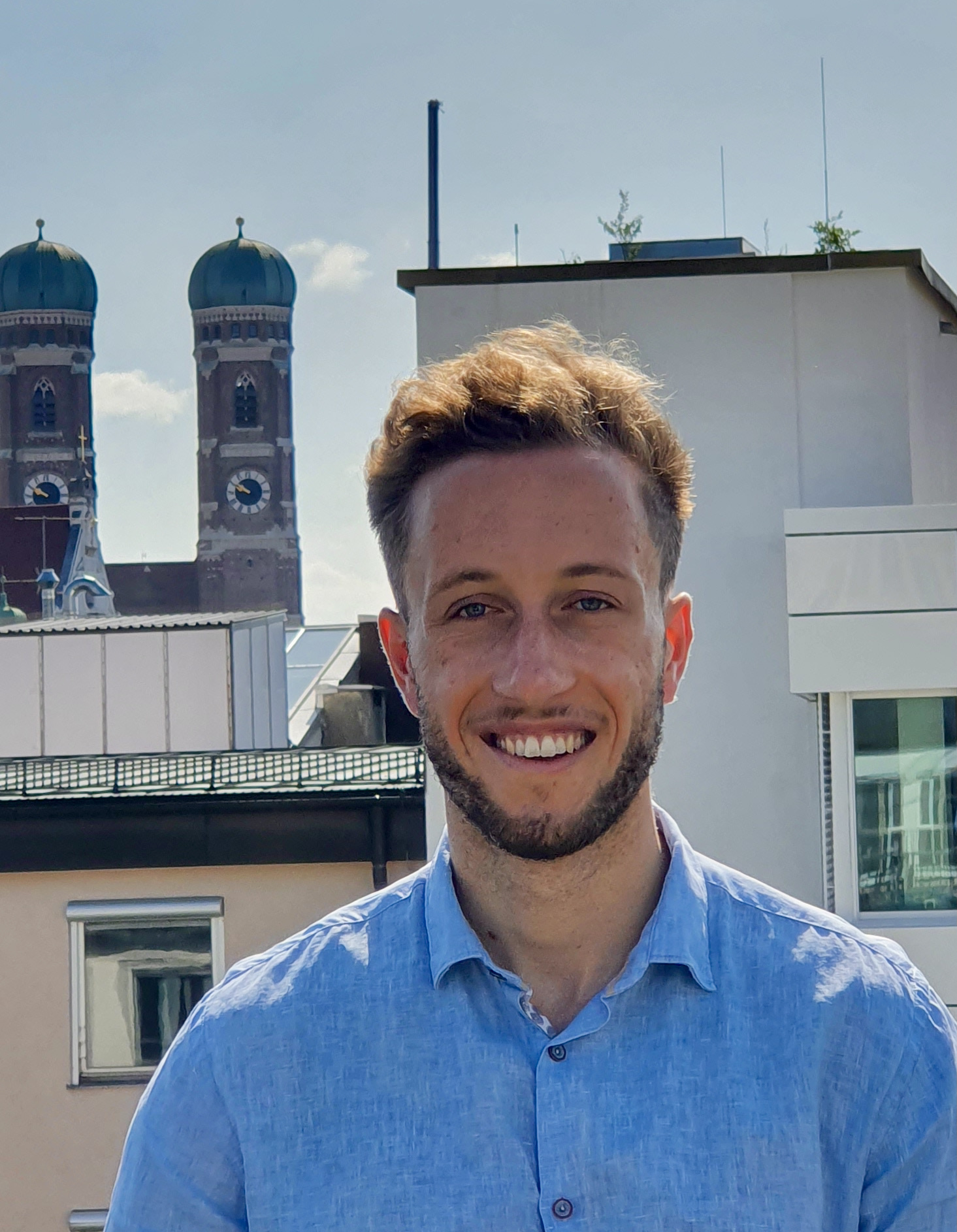}}]{Giulio Evangelisti}
received the B.Sc. and M.Sc. degrees in \fin{E}lectrical \fin{E}ngineering and \fin{I}nformation \fin{T}echnology from the Technical University of Munich (TUM), Germany, in 2017 and 2019, respectively. \fin{From 2017 to 2018, he was part of the signal generator department of the measurement technology division at Rohde \& Schwarz GmbH \& Co. KG, Munich, and from 2019 to 2020 a full-time control engineer at Blickfeld GmbH, Munich.} Since January 2021, he has been a Ph.D. \fin{candidate} at the Chair of Information-oriented Control, TUM School of Computation, Information and Technology. His current research interests include the stability of data-driven control systems, physically consistent machine learning, passivity-based control, and nonlinear systems.
\end{IEEEbiography}
\vspace{-3cm}
\begin{IEEEbiography}[{\includegraphics[width=1in,height=1.25in,clip,keepaspectratio]{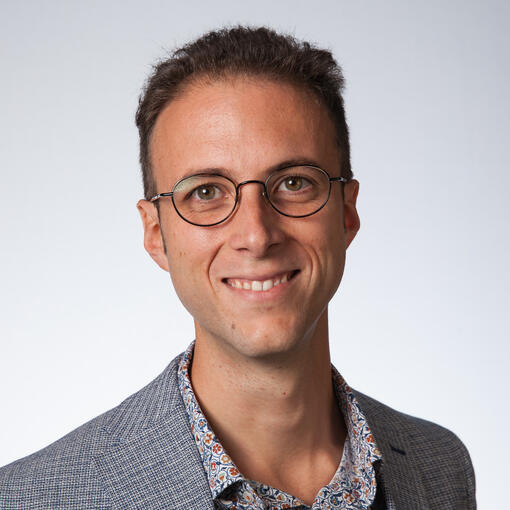}}]{Cosimo Della Santina}
\fin{is an assistant professor at TU Delft, 2628 CD Delft, The Netherlands, and a research scientist at the German Aerospace Institute. He received the Ph.D. degree in robotics (cum laude, 2019) from the University of Pisa. He was a visiting Ph.D. student and a postdoc (2017 to 2019) at the Computer Science and Artificial Intelligence Laboratory, Massachusetts Institute of Technology. He was a senior postdoc (2020) and guest lecturer (2021) in the Department of Informatics, Technical University of Munich. He has been awarded the euRobotics Georges Giralt Ph.D. Award (2020) and the IEEE Robotics and Automation Society Fabrizio Flacco Young Author Award (2019), and he has been a finalist for the European Embedded Control Institute Ph.D. Award (2020). In 2023, he received the IEEE RAS Early Academic Career Award in Robotics and Automation. His research interests include the motor intelligence of physical systems, with a focus on elastic and soft robots.}
\end{IEEEbiography}
\vspace{-3cm}
\begin{IEEEbiography}[{\includegraphics[width=1in,height=1.25in,clip,keepaspectratio]{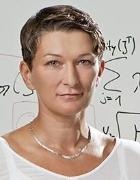}}]{Sandra Hirche}
\fin{holds the TUM Liesel Beckmann Distinguished Professorship and heads the Chair of Information-oriented Control in the Faculty of Electrical and Computer Engineering at Technical University of Munich (TUM), Germany (since 2013). She received the diploma engineer degree in Aeronautical and Aerospace Engineering in 2002 from the Technical University Berlin, Germany, and the Doctor of Engineering degree in Electrical and Computer Engineering in 2005 from the Technische Universität München, Munich, Germany. From 2005-2007 she has been a PostDoc Fellow of the Japanese Society for the Promotion of Science at the Fujita Laboratory at Tokyo Institute of Technology, Japan. Prior to her present appointment she has been an Associate Professor at TUM. Her main research interests include  learning, cooperative, and distributed control with application in human-robot interaction, multi-robot systems, and general robotics. She has published more than 200 papers in international journals, books, and refereed conferences.}
\end{IEEEbiography}

\end{document}